\documentclass[conference]{IEEEtran}
\IEEEoverridecommandlockouts

\usepackage{color}

\usepackage{tabularx}
\usepackage{tikz}
\usepackage{soul}
\usepackage{multirow}
\usepackage{cite}
 
\usepackage{graphicx}
\usepackage{float}

\usepackage{graphicx}
\usepackage[caption=false,font=footnotesize]{subfig} 

\usepackage{amsmath}

\usepackage{amsmath,amsfonts}
\usepackage{algorithm}
\usepackage{array}
\usepackage{algpseudocode}

\usepackage{textcomp}
\usepackage{stfloats}
\usepackage{url}
\usepackage{verbatim}
\usepackage{graphicx}
\usepackage{pifont}
\usepackage{makecell}
\usepackage{bm}
\usepackage{bbm}
\usepackage[english]{babel}
\usepackage{amsthm}
\usepackage{color}

\theoremstyle{definition}
\newtheorem{theorem}{Theorem}

\newtheorem{proposition}{Proposition}
\newtheorem{assumption}{Assumption}
\newtheorem{lemma}{Lemma}

\IEEEoverridecommandlockouts

\usepackage{tabularx}
\usepackage{tikz}
\usepackage{soul}
\usepackage{multirow}
 
\usepackage{graphicx}
\usepackage{float}

\begin{document}

\title{\vspace{-0.68\baselineskip}Energy Efficient Federated Learning
with Hyperdimensional Computing
over Wireless Communication
Networks\vspace{-0.28\baselineskip}
}

\author{Yahao Ding, Yinchao Yang, Jiaxiang Wang,
Zhaohui Yang,\\ Zhu Han, \IEEEmembership{Fellow, IEEE},
and Mohammad Shikh-Bahaei,
\IEEEmembership{Senior Member, IEEE}
\thanks{Yahao Ding, Yinchao Yang, Jiaxiang Wang, and Mohammad Shikh-Bahaei are with the Department of Engineering, King's College London, London, UK (emails: yahao.ding@kcl.ac.uk; yinchao.yang@kcl.ac.uk; jiaxiang.wang@kcl.ac.uk; m.sbahaei@kcl.ac.uk).}
\thanks{Zhaohui Yang is with the College of Information Science and Electronic Engineering, Zhejiang University, Hangzhou, Zhejiang 310027, China, and Zhejiang Provincial Key Lab of Information Processing, Communication and Networking (IPCAN), Hangzhou, Zhejiang, 310007, China (email: yang\_zhaohui@zju.edu.cn).}
\thanks{Zhu Han is with the Department of Electrical and Computer Engineering at the University of Houston, Houston TX 77004, USA, and also with the Department of Computer Science and Engineering, Kyung Hee
University, Seoul 446701, South Korea (e-mail: hanzhu22@gmail.com).}
}

\markboth{Journal of \LaTeX\ Class Files,~Vol.~14, No.~8, August~2015}%
{Shell \MakeLowercase{\textit{et al.}}: Bare Demo of IEEEtran.cls for IEEE Journals}

\maketitle
\vspace{-1\baselineskip}
\begin{abstract} 
In this paper, we investigate a problem of minimizing total energy consumption for secure federated learning (FL) over wireless edge networks. To address the high computational cost and privacy challenges in conventional FL with neural networks (NN) for resource-constrained users, we propose a novel FL with hyperdimensional computing and differential privacy (FL-HDC-DP) framework. In the considered model, each edge user employs hyperdimensional computing (HDC) for local training, which replaces complex neural updates with simple hypervector operations, and applies differential privacy (DP) noise to protect transmitted model information. We optimize the total energy of computation and communication under both latency and privacy constraints. We formulate the problem as an optimization that minimizes the total energy of all users by jointly allocating HDC dimension, transmission time, system bandwidth, transmit power, and CPU frequency. To solve this problem, a sigmoid-variant function is proposed to characterize the relationship between the HDC dimension and the convergence rounds required to reach a target accuracy. Based on this model, we develop two alternating optimization algorithms, where closed-form expressions for time, frequency, bandwidth, and power allocations are derived at each iteration. Since the iterative algorithm requires a feasible initialization, we construct a feasibility problem and obtain
feasible initial resource parameters by solving a per round transmission time minimization problem.
Simulation results demonstrate that the proposed FL-HDC-DP framework achieves up to 83.3\% total energy reduction compared with the baseline, while attaining about 90\% accuracy in approximately 3.5$\times$ fewer communication rounds than the NN baseline.

\end{abstract}

\begin{IEEEkeywords}
Federated learning, hyperdimensional computing, differential privacy, resource allocation, energy efficiency.
\end{IEEEkeywords}

\IEEEpeerreviewmaketitle

\section{Introduction}



In the contemporary digital era, the proliferation of billions of intelligent edge devices, such as smartphones, Internet of things (IoT) devices, and wearable technology, is generating an unprecedented volume of data at the network edge. This trend has catalyzed the rise of edge artificial intelligence (AI), a paradigm focused on processing data and delivering intelligent services locally on these devices. To leverage collective intelligence of this decentralized data while preserving user privacy, federated learning (FL) has emerged as a key enabling technology \cite{10637245,9705093}. By training models locally on each device and sharing only anonymized model updates, FL obviates the need to transfer sensitive raw data to a central server, representing a transformative shift in privacy-preserving collaborative machine learning (ML) \cite{9738815,10353003,10660465}.

However, despite FL great promise, two core challenges constrain its large-scale, real-world deployment. The first is the fundamental bottleneck of energy consumption. Edge devices are typically resource-constrained and battery-powered, yet FL process is inherently energy-intensive due to both local computation, especially when training complex neural network (NN) models, and wireless communication. The second is the core concern of privacy leakage \cite{khaleghi2020prive}. Standard FL is not completely secure, as model updates are still vulnerable to inference attacks that could reveal sensitive user information, necessitating robust, quantifiable privacy guarantees.

To address the high computational energy challenge posed by complex models like NN, hyperdimensional computing (HDC) offers a compelling solution as an emerging, brain-inspired computing paradigm. HDC operates by representing and manipulating information in a vector space with thousands of dimensions. Unlike NN model, which is predicated on iterative backpropagation and floating-point arithmetic, HDC relies on a set of simple and computationally lightweight vector operations. These core operations include bundling, binding, and permutation, which are highly parallelizable and hardware-friendly. Concurrently, to address the privacy leakage risks in FL, differential privacy (DP) has been established as the canonical approach for providing rigorous, mathematically provable privacy guarantees by adding controlled noise.

Existing research has explored these challenges from various angles, yet a unified framework that co-optimizes all key metrics remains elusive. In the domain of resource efficiency optimization, the research has bifurcated into optimizing traditional FL and introducing novel computational paradigms. For traditional frameworks, the efforts have focused on minimizing energy consumption. For instance, the work in \cite{9264742} proposed a comprehensive joint optimization problem, collaboratively adjusting transmission time, power, bandwidth, and computation frequency. The authors of \cite{9916128} similarly focused on minimizing total energy by jointly optimizing weight quantization and wireless resource allocation. Other works have targeted reducing communication overhead and latency. For example, the authors of \cite{hsieh2021fl} and \cite{10473907} demonstrated improved communication efficiency through bipolarized hypervectors and the novel HyperFeel framework, respectively. Furthermore, the study in \cite{10925099} further combined federated split learning with HDC, designing an algorithm to minimize the maximum user transmission time.

The research in privacy and security is equally critical. To defend against threats like gradient leakage attacks, the work in \cite{10660465} proposed a scheme to minimize total privacy leakage by jointly optimizing resource allocation, the selection of layers for uploading, and DP noise. HDC introduces its own unique security considerations. The work in \cite{khaleghi2024private} thoroughly investigated security risks caused by reversible encoding of HDC and designed a framework with sparse encoding and a DP-enhanced model to achieve efficient and secure private learning. Furthermore, balancing privacy and accuracy in continuous learning is a key challenge. The FedHDPrivacy framework \cite{piran2025privacy} was designed specifically for this, actively monitoring and adding only the necessary noise to maintain model accuracy over multiple rounds.

More advanced and systemic designs have also emerged. The HyDREA system \cite{morris2022} focused on noisy FL scenarios, achieving robustness and high efficiency by adaptively adjusting model bitwidth based on the signal-to-noise ratio. From a hardware perspective,  the authors of \cite{9813404} addressed the energy efficiency bottleneck in HDC processors by generating hypervectors on-the-fly instead of relying on memory storage. More broadly, systemic studies and future-looking frameworks are being developed. For example, the authors of \cite{zhang2023hyperdimensional} conducted a systematic study on aggregation strategies and parameter impacts in HDC-FL. The study in \cite{xu2025new} explored the fusion of large AI models and HDC within the context of integrated learning and communication (ILAC) for future sixth generation (6G) networks, demonstrating how to balance learning performance with communication efficiency.

Therefore, a clear research gap exists: there is a lack of a unified framework
that jointly models and co-optimizes the computational efficiency of HDC, the
privacy guarantees of DP, as well as system-level computation and communication resources
to fundamentally solve the energy consumption problem for FL in wireless edge
environments. To fill this critical gap, this paper proposes a novel framework, FL with HDC and DP (FL-HDC-DP).  We specifically design this framework for energy-constrained wireless edge networks, enabling the users to leverage computational efficiency of HDC for local model training while obtaining robust privacy protection through mathematical rigor of DP. The central objective of our work is to minimize the total energy consumption of all participating users via a joint optimization strategy. The contributions of the paper can be summarized as follows:
\begin{itemize}
\item We propose an FL-HDC-DP framework designed for edge devices in wireless networks, where the users employ HDC instead of traditional NN for local
model training. We analyze effects of HDC dimension, DP noise, and data distribution (independent and identically distributed (IID) vs. non-IID) on model
performance.



\item We formulate a joint optimization problem aimed at minimizing total energy consumption, encompassing both local computation and wireless communication. To the best of our knowledge, this is the first work to jointly co-optimize the HDC model dimension together with system resources, including transmission time, transmit power, bandwidth, and CPU frequency. To solve this problem, we first develop a sigmoid-variant model to capture the relationship between HDC dimension and the required convergence rounds. Based on the formulated model, we solve the problem via an alternating algorithm, where the closed-form solution is obtained at each iteration.




\item To obtain a feasible solution for the total energy minimization problem, we construct a per round transmission time minimization problem. Starting from a given initial HDC dimension, a closed-form
feasibility criterion is proposed using the maximum admissible per round communication time and the minimum required bandwidth. If feasible, the initial bandwidth allocation is obtained by solving a convex transmission time minimization via a bisection search over the associated Lagrange multiplier, followed
by closed-form computation of the corresponding transmission times and CPU frequencies.

\item Simulation results indicate that the proposed FL-HDC-DP achieves approximately 90\% accuracy while converging about five times faster than the NN baseline (FL-NN-DP); equivalently, to reach the same accuracy, FL-NN-DP requires about $3.5\times$ as many epochs as FL-HDC-DP. Moreover, the proposed optimization method, which jointly optimizes model dimension, transmission time, power, and CPU frequency, reduces total energy consumption by up to 83.3\% compared to the baseline.
\end{itemize}

The rest of the paper is organized as follows. The preliminary concepts of the HDC model and DP are illustrated in Section \ref{HDC}. The system model is described in Section \ref{model}. The optimization problem is formulated in Section \ref{optimization}, and solution processes are given in Section \ref{AD}. Section \ref{fea} provides a method to find a feasible original solution of the proposed optimization problem. Section \ref{simulation} describes the simulation results and analysis of the proposed approach. Finally, the conclusions are drawn in Section \ref{con}.






\vspace{-0.5em}
\section{Preliminaries}\label{HDC}

\subsection{Notions of HDC}
HDC is a brain-inspired computing paradigm that 
represents information using a high-dimensional hypervector (HV)~\cite{kanerva2009hyperdimensional}. Its learning and inference mainly rely on simple, dimension-preserving operations together with similarity-based retrieval, which yields fast training, hardware-friendly computation, and robustness to noise. An HV is a $d$-dimensional vector with $d$ typically on the order of $10^3$--$10^4$ \cite{ge2020classification}, and we use bipolar HVs, i.e., $\boldsymbol{H}\in\{+1,-1\}^{d}$. Due to the high dimensionality, randomly generated HVs are quasi-orthogonal with high probability \cite{ma2024hyperdimensional}, providing a stable basis for encoding and comparison. HDC manipulates HVs via three key algebraic operations: bundling (addition) aggregates multiple HVs into a single HV to represent collective information, binding (multiplication) associates two HVs to encode their relationship into a new HV, and permutation applies a fixed reordering to inject positional or sequential structure while maintaining similarity-related properties. The framework utilizes two memory components: an item memory (IM) that stores fixed, orthogonal HVs for basic symbols, and an associative memory (AM) that stores class-specific prototypes. During inference, the query HV is compared against these prototypes to determine the nearest match for classification.

\subsection{HDC Model Development}
\begin{figure*}[t]
\vspace{-2.08em}
\centering\includegraphics[width=0.95\textwidth]{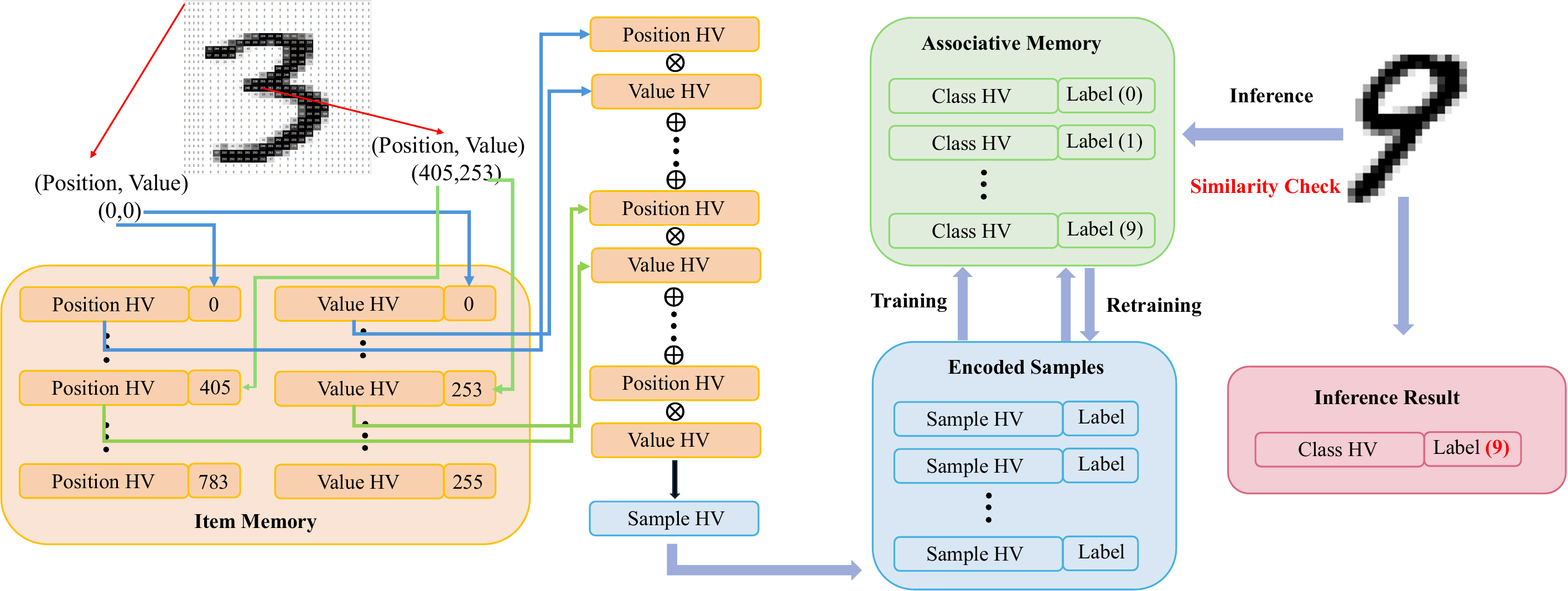}
\vspace{-0.68em}
\caption{Overview of HDC model encoding, training, inference, and retraining.}
\label{fig:H1}
\vspace{-1.28em}
\end{figure*}

HDC model development consists of encoding, training, inference, and optional retraining. The detailed processes are illustrated in Fig. \ref{fig:H1}. We adopt record-based encoding \cite{rahimi2016hyperdimensional} to map an input sample with $S$ pixels $\{(s,a_s)\}_{s=1}^{S}$ into a $d$-dimensional bipolar HV. Specifically, a position item memory assigns each pixel position $s$ a random HV, denoted by position HV $\boldsymbol{P}_s\in\{+1,-1\}^{d}$, while a value item memory maps the pixel value $a_s$ to a value HV $\boldsymbol{V}(a_s)\in\{+1,-1\}^{d}$. The sample HV is then formed by binding position and value HVs at each position and bundling across all positions:
\vspace{-0.48em}
\begin{equation}
\boldsymbol{H}=\sum_{s=1}^{S}\Big(\boldsymbol{P}_s \odot \boldsymbol{V}(a_s)\Big), \vspace{-0.48em}
\label{encoding}
\end{equation}
where $\odot$ denotes the Hadamard product. During training, class HVs $\boldsymbol A_n$ are obtained by adding all encoded HVs of the same class $n$, as given by
\vspace{-0.68em}
\begin{equation}\label{eq:hdc_training}
\boldsymbol{A}_n=\sum_{\ell:\,y_{\ell}=n}\boldsymbol{H}_{\ell},\vspace{-0.48em}
\end{equation}
where $\boldsymbol{H}_{\ell}$ denotes the encoded HV of the $\ell$-th training sample and $y_{\ell}\in\{1,\dots,N\}$ is its ground-truth class label. All class HVs constitute the AM $\mathcal{A}=\{\boldsymbol{A}_1,\dots,\boldsymbol{A}_N\}.$

For inference, a test sample is encoded into a query HV $\boldsymbol{H}_q$, and the predicted label is determined by maximum similarity,
\vspace{-0.48em}
\begin{equation}\label{eq:hdc_inference}
\hat{y}=\arg\max_{n\in\{1,\dots,N\}}\ \mathrm{sim}\!\left(\boldsymbol{H}_q,\boldsymbol{A}_n\right),\vspace{-0.48em}
\end{equation}
where $\mathrm{sim}(\cdot,\cdot)$ can be cosine similarity or Hamming distance. To further improve the accuracy of the HDC model, retraining can be applied to the AM over several additional iterations on the training set. If the prediction is correct, no adjustment is necessary. If the prediction is incorrect, i.e., $\hat{y}\neq y^\star$ where $y^\star$ is the ground-truth class, the AM can be refined by weakening the predicted class and strengthening the true class:
\vspace{-0.48em}
\begin{equation}
\begin{array}{r}
\boldsymbol{A}_{\hat{y}}= \boldsymbol{A}_{\hat{y}}-\eta\,\boldsymbol{H}_q, \\
\boldsymbol{A}_{y^\star}=\boldsymbol{A}_{y^\star}+\eta\,\boldsymbol{H}_q,
\end{array}
\label{retraining}\vspace{-0.48em}
\end{equation}
where $\eta$ is the learning rate.



\subsection{Preliminary Concepts of DP and zCDP}
DP provides a rigorous framework for quantifying privacy loss \cite{dwork2006calibrating}. A randomized mechanism $\mathcal{M}$ satisfies $(\epsilon, \delta)$-DP if for adjacent datasets $D_1, D_2$:
\vspace{-0.48em}
\begin{equation}\label{eq:dp_def}
\Pr\big[\mathcal{M}(D_1)\in L\big]\le e^{\epsilon}\Pr\big[\mathcal{M}(D_2)\in L\big]+\delta,\vspace{-0.48em}
\end{equation}
where $\epsilon$ is the privacy budget and $\delta$ is the failure probability. The Gaussian mechanism achieves this by adding noise proportional to the $\ell_2$-sensitivity, $\Delta = \max_{D_1 \sim D_2} \|\mathcal{M}(D_1) - \mathcal{M}(D_2)\|_2$. However, for multiple communication rounds, standard $(\epsilon, \delta)$ composition is loose. We therefore adopt $\rho$-zCDP, which yields tighter bounds. A mechanism is $\rho$-zCDP if the Rényi divergence between outputs on adjacent datasets is bounded linearly by $\rho$ \cite{bun2016concentrated}. A key advantage of zCDP is its additive composition: for $J$ rounds, the total privacy cost is $\rho_{\text{tot}} = \sum_{j=1}^J \rho_j$.We calibrate the Gaussian noise variance for zCDP using $\sigma^2 = \Delta^2 / (2\rho)$. The final cumulative guarantee can be converted back to standard $(\epsilon, \delta)$-DP via:
\vspace{-0.48em}
\begin{equation}
\varepsilon(\delta)\;=\;\rho_{\text{tot}}+2\sqrt{\rho_{\text{tot}}\ln\!\big(1/\delta\big)}.
\end{equation}


\section{System Model}\label{model}


We consider a single-cell network with one base station (BS) serving $U$ users, as shown in Fig.~\ref{fig:HDC-Fl-DP1}. They cooperatively train an FL model over wireless networks for classification inference \cite{10925099}. 
To reduce computational burden on resource-constrained edge devices and to protect user privacy against an honest-but-curious server and external eavesdroppers, we adopt a secure FL framework that combines HDC with DP, named FL-HDC-DP. 
Let $\mathcal{U}=\{1,2,\ldots,U\}$ denote the user set. 
User $i\in \mathcal{U}$ holds a private dataset $\mathcal{D}_i=\{(\boldsymbol{x}_{i,\ell},y_{i,\ell})\}_{\ell=1}^{D_i}$ where $\boldsymbol{x}_{i,\ell}$ is the $\ell$-th input sample and $y_{i,\ell}\in\{1,\ldots,N\}$ is its class label. Each sample is encoded into a bipolar HV $\boldsymbol{H}_{i,\ell}\in\{+1,-1\}^{d}$ using the record-based encoder introduced in Section~II. Moreover, the main notations are listed in Table~\ref{tab:main_notations_ch6}.

\begin{figure}[!t]
\vspace{-1.58em}
\centering\includegraphics[width=0.45\textwidth]{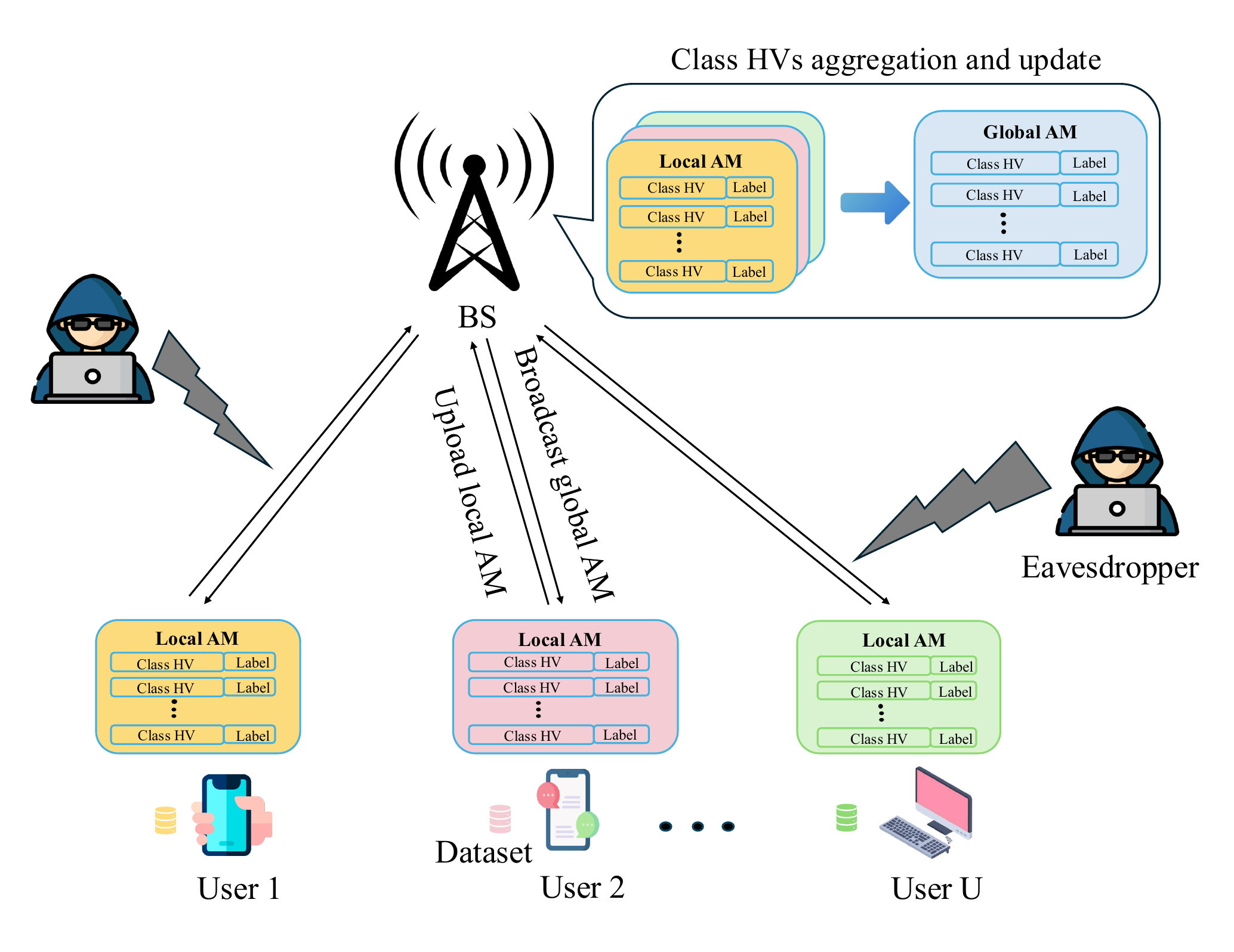} 
\vspace{-1.68em}
\caption{Illustration of the considered model for FL-HDC over wireless communication networks.}
\vspace{-1.28em}
\label{fig:HDC-Fl-DP1}
\end{figure}

\begin{table}[!t]
\centering
\caption{Main Notations}
\vspace{-0.68em}
\label{tab:main_notations_ch6}
\begin{tabular}{l|p{7cm}} 
\hline\hline
\textbf{Symbol} & \textbf{Description} \\
\hline
$\boldsymbol{A}_{i,n}$ & The $n$-th class HV held by user $i$ \\
$\mathcal{A}_{i}$ & The AM of user $i$ \\
$B$      & Total system bandwidth \\
$D_i$    & Number of data samples for user $i$ \\
$\mathcal D_{\text{HDC}}$    & Candidate set of HDC dimensions \\
$e_i$    & Inference error rate of user $i$ \\
$E$      & Total energy consumption of all users \\
$g_i$    & Channel power gain for user $i$ \\
$\boldsymbol{H}_{i,\ell}$ & $\ell$-th encoded HV of user $i$ \\
$J$      & Total number of communication rounds \\
$J_d$    & Rounds required for convergence with dimension $d$ \\
$\kappa$ & Clipping bound for HVs \\
$N_0$    & Noise power spectral density \\
$N_i$    & Number of classes in the AM of user $i$ \\
$P_{\max}$ & Maximum transmit power \\
$r_i$    & Uplink transmission rate of user $i$ \\
$\sigma_j$ & Standard deviation of DP noise in round $j$ \\
$T$      & Total completion time for FL training \\
$U$ & Total number of users \\
$\delta$   & Privacy loss threshold \\
$\epsilon$ & Privacy budget in DP \\
\hline
$d$      & HDC dimension \\
$b_i$    & Bandwidth allocation of user $i$ \\
$p_i$    & Transmit power of user $i$ \\
$f_i$    & CPU frequency of user $i$ \\
$t_i$    & Transmission time of user $i$ \\
\hline\hline
\end{tabular}
\end{table}

\subsection{DP Mechanism and Sensitivity in FL-HDC}



The FL-HDC-DP framework ensures privacy in FL by incorporating DP into the exchange of class HV between clients and the server. Without a privacy-preserving mechanism, this shared information is vulnerable to model inversion and membership inference attacks, which could reveal sensitive training data. To mitigate these risks, Gaussian noise is added to the HDC models, with the variance of the noise calculated based on key privacy parameters: the privacy budget $\epsilon$, the privacy loss threshold $\delta$, and the sensitivity of the function $\Delta$.

In our work, we apply zCDP \cite{bun2016concentrated} to FL-HDC. The $n$-th class HV held by user $i$ at round $j$ is denoted by $\boldsymbol{A}^{j}_{i,n}\in\mathbb{R}^{d}$, and AM of user $i$ is denoted as $\mathcal A^{j}_i=\{\boldsymbol{A}^{j}_{i,1},\ldots,\boldsymbol{A}^{j}_{i,N}\}$.
Our DP guarantee is defined with record-level add or remove adjacency, i.e., one sample may be added or deleted between adjacent datasets.

\begin{assumption}[HDC clipping]
For any user $i$ and sample $\ell$, the encoded bipolar HV is $\ell_2$-clipped before any local aggregation or communication \cite{10720539}:
\vspace{-0.48em}
\begin{equation}
\boldsymbol{H}^{\mathrm{clip}}_{i,\ell}
=\boldsymbol{H}_{i,\ell}\cdot
\min\left\{1,\ \frac{\kappa}{\lVert \boldsymbol{H}_{i,\ell}\rVert_2}\right\},
\big\lVert \boldsymbol{H}^{\mathrm{clip}}_{i,\ell}\big\rVert_2\le \kappa.
\label{clipping}\vspace{-0.48em}
\end{equation}
Before clipping, $\big\lVert\boldsymbol{H}_{i,\ell}\big\rVert_2=\sqrt{d}$ and after clipping equals $\kappa$.
\end{assumption}

Two conditions should be analyzed separately:
\begin{enumerate}
    \item One-pass training ($j=1$):
Each user performs a single pass of encoding and aggregation, summing clipped sample hypervectors with the same label to form class HVs:
\vspace{-0.48em}
\begin{equation}
\boldsymbol{A}^{1}_{i,n}
=\sum_{\ell:\,y_{i,\ell}=n}\boldsymbol{H}^{\mathrm{clip}}_{i,\ell}.\vspace{-0.48em}
\end{equation}
Under sample-level add or remove adjacency, only one sample differs between neighboring datasets; hence, the $\ell_2$-sensitivity of the local AM sent in round~$1$ is
\vspace{-0.48em}
\begin{equation}
\Delta^{1}=\kappa.\vspace{-0.48em}
\end{equation}

\item Retraining rounds ($j\ge 2$):
In round $j\ge 2$, user $i$ compares each $\boldsymbol{H}^{\mathrm{clip}}_{i,\ell}$ with the global AM. If a sample is misclassified with predicted label $\hat y_{i,\ell}\neq y_{i,\ell}$, user $i$ updates two classes:
\vspace{-0.48em}
\begin{equation}
\begin{aligned}
&\boldsymbol{A}^{j}_{i,y_{i,\ell}}= \boldsymbol{A}^{j}_{i,y_{i,\ell}} + \boldsymbol{H}^{\mathrm{clip}}_{i,\ell},\\
&\boldsymbol{A}^{j}_{i,\hat y_{i,\ell}} = \boldsymbol{A}^{j}_{i,\hat y_{i,\ell}} - \boldsymbol{H}^{\mathrm{clip}}_{i,\ell}.
\end{aligned}\vspace{-0.48em}
\end{equation}
A single sample therefore affects at most two class HVs whose norms are bounded by $\kappa$, which yields the per-round sensitivity
\vspace{-0.48em}
\begin{equation}
\Delta^{j}=\sqrt{2}\,\kappa,\qquad j\ge 2.\vspace{-0.48em}
\end{equation}
\end{enumerate}

Given the sensitivity, the per-round noise required to ensure that the overall $J$-round mechanism satisfies $(\varepsilon,\delta)$-DP is characterized by the following lemma.

\begin{lemma}[Per-round Gaussian noise to ensure $(\varepsilon,\delta)$-DP after $J$ rounds]
We calibrate noise using zCDP. Define
\vspace{-0.48em}
\begin{equation}
 \rho_{\max}=\Big(\sqrt{\varepsilon+\ln(1/\delta)}-\sqrt{\ln(1/\delta)}\Big)^{2}.   \vspace{-0.48em}
\end{equation}
Let the sample-level $\ell_{2}$-sensitivities be $\Delta^1=\kappa$ for the one-pass round and 
$\Delta^{j}=\sqrt{2}\,\kappa$ for retraining rounds $j=2,\ldots,J$. 
Then, to ensure that the overall $J$-round mechanism satisfies $(\varepsilon,\delta)$-DP, the required per-round \emph{noise standard deviation} added to each coordinate of every class HV $\boldsymbol{A}^{j}_{i,n}$ is
\vspace{-0.48em}
\begin{equation}
\sigma_j=
\begin{cases}
\displaystyle \kappa\,\sqrt{\dfrac{J}{2\,\rho_{\max}}}, & j=1,\\
\sqrt{2}\,\kappa\,\sqrt{\dfrac{J}{2\,\rho_{\max}}}, & j=2,\ldots,J.
\end{cases}
\label{DPnoise}\vspace{-0.48em}
\end{equation}
\end{lemma}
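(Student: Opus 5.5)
The plan is to derive the per-round noise by running the three zCDP facts recalled in the preliminaries in reverse order.

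\textbf{Step 1: invert the zCDP-to-DP conversion.} Fix the target $(\varepsilon,\delta)$. Set $L=\ln(1/\delta)$ and $x=\sqrt{\rho_{\text{tot}}}$. The conversion $\varepsilon(\delta)=\rho_{\text{tot}}+2\sqrt{\rho_{\text{tot}}L}$ becomes
\begin{equation*}
x^2+2x\sqrt{L}=(x+\sqrt{L})^2-L .
\end{equation*}
This is strictly increasing for $x\ge 0$. Hence $\varepsilon(\delta)\le\varepsilon$ holds if and only if $x\le\sqrt{\varepsilon+L}-\sqrt{L}$, that is, if and only if $\rho_{\text{tot}}\le\rho_{\max}$. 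So it suffices to make the whole $J$-round release $\rho_{\max}$-zCDP.

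\textbf{Step 2: split the budget across rounds.} By additive composition, $\rho_{\text{tot}}=\sum_{j=1}^J\rho_j$. I will allocate the budget uniformly, $\rho_j=\rho_{\max}/J$, which gives $\rho_{\text{tot}}=\rho_{\max}$ exactly.

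Composition needs justification here, because the rounds are adaptive: round $j$ uses the noisy global AM from round $j-1$. This is exactly the adaptive composition setting covered by zCDP. Post-processing invariance covers the server's aggregation and the users' retraining computations based on previously released noisy outputs.

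\textbf{Step 3: calibrate the Gaussian mechanism per round.} In round $j$, the released object is the concatenation of the $N$ class HVs $\boldsymbol{A}^j_{i,n}$, viewed as one vector in $\mathbb{R}^{Nd}$. Its $\ell_2$-sensitivity is $\Delta^j$ as computed above: $\kappa$ for $j=1$ and $\sqrt{2}\kappa$ for $j\ge 2$. The latter holds because one sample changes at most two class HVs, each by a vector of norm at most $\kappa$, so the joint change has norm at most $\sqrt{\kappa^2+\kappa^2}$.

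Adding i.i.d.\ $\mathcal{N}(0,\sigma_j^2)$ noise to each coordinate is $\frac{(\Delta^j)^2}{2\sigma_j^2}$-zCDP. Setting this equal to $\rho_{\max}/J$ gives
\begin{equation*}
\sigma_j=\Delta^j\sqrt{\frac{J}{2\rho_{\max}}},
\end{equation*}
which is exactly \eqref{DPnoise} for both cases. Larger $\sigma_j$ only lowers $\rho_j$, so the stated values are sufficient.

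\textbf{Main obstacle.} The hard part is not the algebra but making the sensitivity argument in Step 3 valid for retraining rounds. A single sample could in principle be misclassified in several passes within a round, or its removal could change other samples' predictions. To handle this, I will take the model assumed by the paper: within each round, each sample contributes at most one update pair, and predictions are computed against the fixed global AM received from the previous round. That global AM is the same under both neighboring datasets once we condition on past outputs, so other samples' updates are unchanged and the bound $\sqrt{2}\kappa$ holds. I would state this conditioning explicitly when invoking adaptive composition.
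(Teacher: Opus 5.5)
Your proposal is correct and is the paper's argument: the Gaussian mechanism gives $\rho_j=(\Delta^j)^2/(2\sigma_j^2)=\rho_{\max}/J$ per round, additive composition gives $\rho_{\text{tot}}=\rho_{\max}$, and the zCDP-to-DP conversion finishes it; your Step~1 only makes explicit the identity $\rho_{\max}+2\sqrt{\rho_{\max}\ln(1/\delta)}=\varepsilon$, which the paper leaves implicit. Your sensitivity discussion is not needed, because the lemma assumes $\Delta^1=\kappa$ and $\Delta^j=\sqrt{2}\,\kappa$; if you do pursue it, the $\sqrt{2}\,\kappa$ bound also requires each retraining round to start from an AM determined by past released outputs, not from the user's un-noised accumulated AM.
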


\noindent\emph{Proof.}
A Gaussian mechanism with $\ell_{2}$-sensitivity$\Delta$ and per-coordinate noise standard deviation $\sigma$ is $\rho=\Delta^{2}/(2\sigma^{2})$-zCDP. 
Across $J$ rounds, zCDP composes additively:
$\rho_{\text{tot}}=\sum_{j=1}^{J}{\Delta^{j}}^2/\big(2\sigma_j^{2}\big)$. 
With the choice $\sigma_j=\Delta^j\sqrt{J/(2\rho_{\max})}$, we obtain $\rho_{\text{tot}}=\rho_{\max}$. 
Finally, any $\rho$-zCDP mechanism implies $(\varepsilon,\delta)$-DP via $\varepsilon=\rho+2\sqrt{\rho\ln(1/\delta)}$. \hfill$\square$



\subsection{The Framework of FL-HDC-DP}
The details of the FL-HDC-DP training process include three steps and the overall procedure is summarized in Algorithm~\ref {alg:fl-hdc-dp}:

\textbf{Step 1: Server-side initialization.}
The BS first generates bipolar HV benchmarks to ensure consistency in the
position HVs and value HVs across all clients. It also initializes a global AM and broadcasts it to all users, together with the training hyperparameters such as learning rate, DP clipping bound, and the number of local passes per round.

\textbf{Step 2: Client-side local training with DP.}
In the first round, each user encodes all local raw data into sample HVs using Eq. \eqref{encoding}, and clips them according to Eq. \eqref{clipping}, and then aggregates them according to their class to obtain their local class HVs $\boldsymbol{A}_{i,n}$ and the local AM $\mathcal{A}_i$. In subsequent rounds, the users perform retraining guided by the received global AM $\mathcal{A}_g^j$ at the $j$th round from the server. Specifically, all local sample HVs are compared with each class HV in the global AM $\mathcal{A}_g^j$ using the cosine similarity measure. For incorrectly predicted cases, the class HVs are updated according to \eqref{retraining}.  
After encoding or retraining, the resulting $\mathcal{A}_{i}^{j+1}$ is added with DP noise as specified as follows:
\vspace{-0.48em}
\begin{equation}
\tilde{\mathcal{A}}_{i}^{j+1}= {\mathcal{A}}_{i}^{j+1} + \boldsymbol{n}_i^{j+1}. \vspace{-0.48em}
\end{equation}
The noise standard deviation of $\boldsymbol{n}_i^{j+1}$ is calculated according to Eq. \eqref{DPnoise}. Finally, local AMs are transmitted to the BS. 



\textbf{Step 3: Server aggregation and broadcast.}
Upon receiving the DP-protected local AMs from all users in $\mathcal{U}$, the BS aggregates them by calculating the element-wise average of the class HV to obtain the updated global AM, as given by
\vspace{-0.48em}
\begin{equation}
{\mathcal{A}}_{g}^{j+1}=\frac{\sum_{i=1}^U{\tilde{\mathcal{A}}}_{i}^{j+1}}{U}\vspace{-0.48em}
\end{equation}
The updated global AM $\mathcal{A}_{g}^{j+1}$ is then broadcast to all users.

Steps 2–3 repeat until the desired accuracy is achieved or the global loss converges.


\begin{algorithm}[t]
\caption{FL-HDC-DP Algorithm}
\label{alg:fl-hdc-dp}
\begin{algorithmic}[1]
\Statex \textbf{Input:}User set $\mathcal{U}$; class set $\mathcal{N}$; user dataset $\mathcal{D}_i$, round $J$, learning rate $\eta$; clip bound $\kappa$; noise standard deviation\ $\{\sigma_j\}_{j=1}^{J}$.
\Statex \textbf{Output:} Final global AM $\mathcal{A}_g^J$. 
\For{$j = 0$ to $J-1$}
\State BS broadcasts global AM $\mathcal{A}_g^j$ and bipolar HV benchmarks to all users.
\For{user $i \in \mathcal{U}$ in parallel}
        \If{$j=0$}
        \State Encode local data to obtain $\boldsymbol{H}_{i,\ell}$ and clip them according to Eq.\eqref{clipping}.
        \State Aggregate all HVs to obtain local AM $\mathcal{A}_{i}$.

        \Else
            \State Perform retraining guided by $\mathcal{A}_g^j$ to update the local AM according to Eq.\eqref{retraining}, yielding $\mathcal{A}_i^{j+1}$.
        \EndIf 
        \State Add DP noise to $\mathcal{A}_i$: $\tilde{\mathcal{A}}_i^{j+1} =\mathcal{A}_i^{j+1} + \boldsymbol{n}_i^{j+1}$.
        \State Transmit $\tilde{\mathcal{A}}_i^{j+1}$ to the BS.
    \EndFor
    
    \Statex \textbf{Server-side Aggregation:}
    \State Receive $\{\tilde{\mathcal{A}}_i^{j+1}\}_{i=1}^U$ from all users.
    \State Conduct element-wise
average to update the global AM: $\mathcal{A}_g^{j+1} = \frac{1}{U} \sum_{i=1}^U \tilde{\mathcal{A}}_i^{j+1}$.
    \State Broadcast $\mathcal{A}_g^{j+1}$ to all users for the next round.
\EndFor
\end{algorithmic}
\end{algorithm}

\subsection{Transmission Model}
After local computing, the users upload their DP-added AMs to the BS for aggregation. We adopt frequency division multiple access (FDMA). The achievable uplink transmission rate between user \(i\) and the server with the allocated bandwidth is given by \cite{7762913}
\vspace{-0.48em}
\begin{equation}\label{eq:fdma_rate}
r_i \;=\; b_i \log_2\!\left(1+\frac{p_i g_i}{N_0 b_i}\right),\vspace{-0.48em}
\end{equation}
where \(N_0\) denotes the noise power spectral density, \(b_i\) is the bandwidth allocated to user \(i\), \(p_i\) is the user's transmit power, and \(g_i\) is the end-to-end channel power gain. The per-round bandwidth allocations satisfy
$\sum_{i\in\mathcal{U}} b_i \le B$ and $b_i\ge0$, where $B$ is the total bandwidth.
The corresponding transmission time is given by:
\vspace{-0.48em}
\begin{equation}\label{eq:t}
t_i \;=\; \frac{N_id}{r_i}.\vspace{-0.48em}
\end{equation}
where $N_id$ be the payload size of the AMs update for user \(i\), $N_i$ is the number of classes in the AM of user $i$, and $d$ is the HDC dimension.

\subsection{Energy Consumption Model}
In our network, we focus on the energy consumption of each user during the FL training phase, which can be divided into two primary components: a) local computation, including encoding the dataset and HDC model retraining by using its local dataset and the received global AM from server, and b) wireless communication, which involves transmitting the updated local AM.

\subsubsection{Energy Consumption of Computation Model} During the FL training process, in the first round, each user encodes the entire local dataset into HVs and aggregates HVs of the same class to construct the local AM. In the subsequent rounds, the user retrains the HDC model using the encoded HVs together with the global AM received from the server. These procedures are implemented through simple operations such as bundling, binding, and permutation.

Let \(C_{\rm enc}\), \(C_{\rm agg}\), \(C_{\rm sim}\), and \(C_{\rm up}\) denote the CPU cycles per dimension required for encoding, first‑round aggregation, similarity comparison, and error‑driven update, respectively. Let $D_i$ denote the total number of data sample for user $i$, 
$d$ is HV dimension,
$f_i$ is CPU frequency,
$\gamma$ is the switched‑capacitance coefficient,
$e_i = \frac{M_i}{D_i}$ is inference error rate, and $M_i$ is the number of misclassifications per round.
\begin{itemize}
    \item Round 1 (Encoding \& Aggregation): Define the per‑dimension initialization CPU cycles:
$
  C_i^{(\text{init})}
  = C_{\text{enc}} + C_{\text{agg}}
$, so that the latency and energy are given by\cite{7572018}
\vspace{-0.48em}
\begin{equation}
    \tau_i^{1} = \frac{D_i\,d\,C_i^{(\text{init})}}{f_i}, \quad \forall i \in \mathcal{U},\vspace{-0.48em}
\label{t1}
\end{equation}
\begin{equation}
    E_i^{1} = \gamma\,D_i\,d\,C_i^{(\text{init})}\,f_i^2.
\label{e1}\vspace{-0.48em}
\end{equation}
where $D_i\,d\,C_i^{(\text{init})}$ is the total CPU cycles.
 \item Rounds $J\ge2$ (Retraining): Define the per‑dimension retraining CPU cycles:
$
  C_i^{(\text{ret})}
  = C_{\text{sim}}+ e_i\,C_{\text{up}}.
$
Then, the time and energy consumption at user $i$ for retraining the HDC model can be expressed as follows:
\vspace{-0.48em}
\begin{equation}
    \tau_i^{(\text{ret})} = \frac{D_i\,d\,C_i^{(\text{ret})}}{f_i},  \quad \forall i \in \mathcal{U},\vspace{-0.48em}
\label{tr}
\end{equation}
\begin{equation}
   E_i^{(\text{ret})} = \gamma\,D_i\,d\,C_i^{(\text{ret})}\,f_i^2.\vspace{-0.48em}
\label{er}
\end{equation}
Eq. \eqref{er} captures the cost of similarity checks plus inference error‑driven updates, with inference error-rate \(e_i\).
\end{itemize}




\subsubsection{Energy Consumption of Transmission Model} After local computation, users transmit their local AM to the server for aggregation via FDMA. Based on Eqs. \eqref{eq:fdma_rate} and \eqref{eq:t}, the energy consumption for data transmission can be expressed as 
\vspace{-0.48em}
\begin{equation}
    E_i^{(\text{trans})}=t_i\,p_i,\vspace{-0.48em}
\end{equation}
where $p_i$ is the transmission power.

\subsubsection{Total Energy Consumption}
The total energy consumption of all users participating in FL is given as
\vspace{-0.48em}
\begin{align} 
E=&\!\!\!\underbrace{\sum_{i=1}^{U} \left(E_i^{1}\!+\!E_i^{(\text{trans})}\right)}_{\text{First round energy consumption}}\!\!\!\!\!+ \underbrace{(J\!-\!1)\sum_{i=1}^{U} \left(E_i^{(\text{ret})}\!+\!E_i^{(\text{trans})}\right)}_{\text{Energy consumption of}\,{J-1}\,\text{rounds}},\nonumber\\
=&\sum_{i=1}^U \!\gamma D_i d C_i^{(\text{init})}\!f_i^2 \!+ \!(J\!-\!1)\gamma D_i d C_i^{(\text{ret})}\!f_i^2 \!+\! J t_i p_i,
\label{E11}\vspace{-0.48em}
\end{align}
where $J$ is the total number of communication rounds. Since the operations in the first round differ from those in the subsequent rounds, the energy consumption is calculated separately. The former of Eq. \eqref{E11} represents the energy consumption of all users in the first round, while the latter denotes the energy consumption of all users in the remaining $J-1$ rounds.

Hereinafter, the total time needed for completing the execution of the FL algorithm is called the completion time. The completion time of each user includes the local computation time and transmission time, based on Eqs. \eqref{t1} and  \eqref{tr}, the completion time of user $i$ will be 
\vspace{-0.48em}
\begin{align} 
T_i=&\underbrace{\tau_i^{1}+t_i}_{\text{First round time cost}}+ \underbrace{(J-1) \left(\tau_i^{(\text{ret})}+t_i\right)}_{J-1\, \text{rounds time cost}},\nonumber \\ 
=&\frac{D_i d \left(C_i^{(\text{init})}+(J-1) C_i^{(\text{ret})}\right)}{f_i} + J \frac{N_id}{r_i}.
\end{align}

\section{Problem Formulation}\label{optimization}
In this section, we formulate the total energy consumption minimization problem of all users for secure FL-HDC, which jointly considers HDC dimension, transmission time, bandwidth, transmit power, and computing frequency. The optimization problem is given by
\vspace{-0.5em}
\begin{align}
&\min_{_{d,\boldsymbol{t},\boldsymbol{b},\boldsymbol{p},\boldsymbol{f}}}  \quad E\label{A}\\
\text{s.t.}
&\frac{D_i d \left(\!C_i^{(\text{init})}\!+\!(J_d\!-\!1) C_i^{(\text{ret})}\!\right)}{f_i} \!+ \!J_d t_i\!\leq \!T,\!\quad\! \forall i\in \mathcal U \tag{\ref{A}{a}}, \label{Aa}\\
&t_ib_i\log_2\left(1+\frac{p_ig_i}{N_0b_i}\right)\geq N_id,\quad\forall i\in  \mathcal U \tag{\ref{A}{b}}, \label{Ab}\\
& \sum_{i=1}^{K} b_i\leq B \tag{\ref{A}{c}}, \label{Ac}\\
& d\in \mathcal D_{\text{HDC}}\tag{\ref{A}{d}}, \label{Ad}\\
& 0 \leq p_{i}\leq P_{\text{max}}, \quad\forall i\in \mathcal U \tag{\ref{A}{e}}, \label{Ae}\\
&0\le f_i\le f_i^{\max},\quad \forall i\in  \mathcal U \tag{\ref{A}{f}}, \label{Af}\\
& b_i\ge 0, \quad\forall i\in U \tag{\ref{A}{g}}, \label{Ag}\\
& t_i\ge 0, \quad\forall i\in \mathcal U \tag{\ref{A}{h}}, \label{Ah}\vspace{-0.48em}
\end{align}
where $\boldsymbol{t} = [t_1, \ldots, t_U]^T$,
$\quad \boldsymbol{b} = [b_1, \ldots, b_U]^T$,
$\boldsymbol{p} = [p_1, \ldots, p_U]^T$. $J_d$ represents the number of communication rounds required for a secure HDC-FL model of dimension $d$ to achieve the desired accuracy. Since $d$ is optimized in Problem~\eqref{A}, we use $J_d$ to explicitly denote the required training rounds as a function of $d$. Constraint \eqref{Aa} ensures that the combined local execution time and transmission time for each user does not exceed the maximum completion time $T$ for the whole FL algorithm. Since the FL algorithm is performed in parallel and the server updates the global AM only after receiving the uploaded local AMs from all users, this constraint synchronizes the parallel process. Constraint \eqref{Ab} indicates the restriction of the data transmission, while constraint \eqref{Ac} limits the sum of all allocated bandwidth to be smaller than the total amount of bandwidth.
The discrete set of admissible HDC dimensions is given by \eqref{Ad}. Furthermore, constraints \eqref{Ae} and \eqref{Af} limit the transmission power of each user and the maximum local computation frequency, respectively. Lastly, constraints \eqref{Ag} and \eqref{Ah} define the type of variable. 

\section{Algorithm Design}\label{AD}


To solve the optimization of dimension, bandwidth, power, and time, we propose an alternating optimization scheme that incorporates the HDC dimension. Initially, we characterize the relationship between the required convergence rounds $J_d$ and HDC dimension $d$. Subsequently, the optimization problem is decomposed into two iteratively subproblems: subproblem 1 focuses on optimizing the dimension $d$ given the fixed resource allocation $(\boldsymbol{t},\boldsymbol{b},\boldsymbol{p},\boldsymbol{f})$, and subproblem 2 focuses on optimizing resource allocation given $d$. By alternating between the two sets of variables, this scheme efficiently converges to the globally optimal solution $(\boldsymbol{d}^{*}, \boldsymbol{b}^{*}, \boldsymbol{p}^{*}, \boldsymbol{t}^{*}, \boldsymbol{f}^{*})$ that jointly minimizes the total energy.

\subsection{Empirical Modeling of Convergence Rounds}
In the total energy minimization problem \eqref{A}, the number of communication rounds, $J_d$, is a critical coupling variable that links the HDC dimension $d$ to both computation and communication energy consumption. While increasing the dimension $d$ enhances model robustness and accelerates convergence, this improvement is observed to be non-linear and subject to saturation at high dimensions. Since deriving an analytical relationship among dimension, privacy, and convergence remains challenging, we develop an empirical model to precisely estimate $J_d$ for any given $d$.

This empirical relationship is derived by observing how the convergence rounds $J_d$ vary with dimension $d$ under different privacy budgets $\epsilon$ and target accuracies ($\text{Acc}_{\text{target}}$). Based on the non-linear inverse relationship that asymptotically approaches a lower bound, as consistently revealed by multiple independent simulation runs and statistical averaging, we propose the following sigmoid-variant function to continuously approximate the convergence rounds $J_d$:
\vspace{-0.48em}
\begin{equation}
J_d(d) = \mu + \frac{\nu}{1 + e^{\beta(\log d - \alpha)}} 
 \label{J_d}\vspace{-0.48em}
\end{equation}
where $J_d$ denotes the total number of communication rounds required to achieve the target accuracy. The coefficients $\mu$, $\nu$, $\alpha$, and $\beta$ are obtained via non-linear least squares fitting against the simulated data and depend on the selected  $\text{Acc}_{\text{target}}$ and $\epsilon$. As illustrated in Fig. \ref{fig:fitting}, this model closely captures the actual relationship between $d$ and $J_d$.

We design this structure of this model to reflect the physical characteristics of HDC: the term involving the exponential function captures the rapid, non-linear performance gains observed when moving from low to moderate dimensions, while the constant parameter $\mu$ represents the minimum limiting number of rounds achievable in the high-dimensional regime.



\subsection{Iterative Algorithm}
The proposed iterative algorithm mainly contains two subproblems at each iteration. Subproblem 1 optimizes the dimension $d$ given the fixed resource allocation ($\boldsymbol{t}, \boldsymbol{b}, \boldsymbol{p}, \boldsymbol{f}$), and subproblem 2 optimizes ($\boldsymbol{t}, \boldsymbol{b}, \boldsymbol{p}, \boldsymbol{f}$) based on the obtained $d$ in the previous step.

\subsubsection{Subproblem 1: HDC Dimension Allocation for Fixed ($\boldsymbol{t}, \boldsymbol{b}, \boldsymbol{p}, \boldsymbol{f}$)} 
In this step, problem (\ref{A}) becomes:
\vspace{-0.48em}
\begin{align}
&\min_{_{d}}  \quad \sum_{i=1}^U (Y_i - R_i) d + R_i J_d d + Q_i J_d \label{d}\\
\text{s.t.}
&\frac{D_i d \left(\!C_i^{(\text{init})}\!+\!(J_d\!-\!1) C_i^{(\text{ret})\!}\right)}{f_i}\! + \!J_d t_i\!\leq \!T,\!\!\quad \forall i\in \mathcal U \tag{\ref{d}{a}}, \label{da}\\
& d\in \mathcal D_{\text{HDC}} \tag{\ref{d}{b}}. \label{db}
\vspace{-0.48em}
\end{align}
where the constants $Y_i = \gamma D_i C_i^{(init)} f_i^2$, $R_i = \gamma D_i C_i^{(ret)} f_i^2$, and $Q_i = t_i p_i$ are determined by the fixed resource allocation. And we denote $E_i(d) = (Y_i - R_i) d + R_i J_d d + Q_i J_d$.

To find the optimal $d$, we relax the discrete constraint \eqref{db} based on \eqref{J_d} and seek the optimal continuous dimension $\tilde{d}$ by setting the first-order derivative of the total energy $E(d)$ with respect to $d$ to zero. The derivative $\frac{\partial E(d)}{\partial d}$ is given by
\vspace{-0.48em}
\begin{equation}
    \frac{\partial E(d)}{\partial d} \!= \!\sum_{i=1}^{U} \!\left[ (Y_i \!- \!R_i) \!+ \!R_i \!\left( J_d'(d) d \!+ \!J_d \right)\! + \!Q_i J_d'(d) \right],\vspace{-0.48em}
\end{equation}
where
$J_d'(d) = \frac{\partial J_d(d)}{\partial d} = - \frac{\nu \beta e^{\beta(\log d - \alpha)}}{d \left(1 + e^{\beta(\log d - \alpha)}\right)^2}$. Setting the derivative to zero yields the following non-linear equation:
\vspace{-0.48em}
\begin{equation}
\sum_{i=1}^{U} \!\left[ (Y_i\! - \!R_i)\! + \!R_i \left( J_d'(d)d \!+\! J_d(d) \right)\! +\! Q_i J_d'(d) \right]\! = \!0.
\label{eq:nonlinear_equation_log_logistic} \vspace{-0.48em}
\end{equation}
We solve Eq. \eqref{eq:nonlinear_equation_log_logistic} via bisection to obtain candidate stationary points. To guarantee global optimality, we need to evaluate the total energy $E(d)$ at these points as well as at the boundary values determined by Eq. \eqref{da}. The optimal continuous dimension $\tilde{d}$ is the one that minimizes $E(d)$ among these candidates. Finally, $\tilde{d}$ is projected onto the closest dimension within the discrete feasible set $\mathcal D_{\text{HDC}}$.



\subsubsection{Resource Allocation for Fixed ${d}$} In the second stage, the $d$ and $J_d$ are fixed, the problem is to find the optimal parameters $(\boldsymbol t,\boldsymbol b,\boldsymbol p,\boldsymbol f)$ that minimize the total energy. We employ a secondary alternating update scheme, where we alternate between optimizing the uplink time $\boldsymbol{t}$ and optimizing the remaining resources $(\boldsymbol{f}, \boldsymbol{b}, \boldsymbol{p})$.

In the first step of this secondary iteration,
the CPU frequency vector \(\boldsymbol f\), bandwidth vector \(\boldsymbol b\) and the power vector \(\boldsymbol p\) are fixed.  Under this assumption, the original joint problem \eqref{A} decouples across users and reduces to a linear program in the uplink times \(\boldsymbol t\).  Specifically, we solve
\vspace{-0.48em}
\begin{align}
& \min_{\boldsymbol t}\quad \sum_{i=1}^U p_it_i, \label{B}\\
\text{s.t.}
&\frac{D_i d \left(\!C_i^{(\text{init})}\!+\!(J_d-1) C_i^{(\text{ret})}\!\right)}{f_i} \!+ \!J_d t_i\!\leq\! T,\!\!\quad\forall i\in \mathcal U \tag{\ref{B}{a}}, \label{Ba}\\
&t_ib_i\log_2\left(1+\frac{p_i g_i}{N_0b_i}\right)\geq N_id,\;\forall i\in  \mathcal U \tag{\ref{B}{b}}, \label{Bb}\\
  & t_i\ge 0, \;\forall i\in \mathcal U
  \tag{\ref{B}{c}}, \label{Bc}\vspace{-0.48em}
\end{align} 
where for any given $d$, $N_id$ can be treated as a constant. Because the objective in Eq. \eqref{B} is linear and monotonically increasing with respect to \(\boldsymbol t\) and each rate‐constraint
$b_i\log_2\!\Bigl(1 + \tfrac{p_i\,g_i}{N_0\,b_i}\Bigr)\,t_i \ge N_id$
imposes a lower bound:
\vspace{-0.48em}
\begin{equation}
  t_{i}^{\text{min}} = \frac{N_id}{b_i\log_2\bigl(1 + \frac{p_i g_i}{N_0 b_i}\bigr)},\quad\forall i\in  \mathcal U,
  \label{tmin}\vspace{-0.48em}
\end{equation}
To minimize the transmission energy in Eq. \eqref{B}, the optimal time can be derived using Theorem \ref{theorem1}.

\begin{theorem} The optimal solution $t^\star$ of problem \eqref{B} satisfies
\vspace{-0.48em}
\begin{equation}
  t_i^* = t_{i}^{\text{min}},\quad\forall i\in  \mathcal U,\vspace{-0.48em}
  \label{t_opt}
\end{equation}
where $t_i^{\min}$ is given by \eqref{tmin}, provided the feasibility condition
$t_i^{\min}\le \bar t_i$ holds for all $i$, with
$\bar t_i = \big(T-\frac{D_i d(C_i^{(\mathrm{init})}+(J_d-1)C_i^{(\mathrm{ret})})}{f_i}\big)\big/J_d$.
\label{theorem1}
\end{theorem}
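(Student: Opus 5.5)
Problem \eqref{B} separates across users once $\boldsymbol f,\boldsymbol b,\boldsymbol p$ are fixed. The objective $\sum_i p_i t_i$ is a sum of terms that each depend only on $t_i$, and every constraint \eqref{Ba}--\eqref{Bc} involves a single $t_i$. So \eqref{B} splits into $U$ independent scalar linear programs
\begin{equation*}
\min_{t_i}\; p_i t_i \quad \text{s.t.}\quad t_i\le \bar t_i,\;\; t_i\ge t_i^{\min},\;\; t_i\ge 0,
\end{equation*}
and it suffices to solve each one.

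For each user I rewrite the constraints as an interval. Dividing \eqref{Ba} by $J_d>0$ and moving the computation term to the right gives $t_i\le \bar t_i$. For \eqref{Bb}, the rate $r_i=b_i\log_2(1+p_ig_i/(N_0b_i))$ is strictly positive whenever $b_i>0$ and $p_i>0$. Dividing by $r_i$ then gives $t_i\ge t_i^{\min}$, where $t_i^{\min}$ is given in \eqref{tmin}. Since $t_i^{\min}\ge 0$, constraint \eqref{Bc} is implied and can be dropped. The feasible set for user $i$ is therefore $[t_i^{\min},\bar t_i]$, which is nonempty exactly under the stated hypothesis $t_i^{\min}\le\bar t_i$.

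Each scalar objective $p_it_i$ is nondecreasing in $t_i$, so its minimum over the interval is attained at the left endpoint, giving $t_i^*=t_i^{\min}$. Combining the $U$ scalar solutions gives a minimizer of \eqref{B}, because the objective is a sum of separable terms over a Cartesian-product feasible set.

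The main point needing care is the degenerate cases. If $p_i>0$, the objective is strictly increasing and $t_i^{\min}$ is the unique minimizer. If $p_i=0$, then $r_i=0$ and $t_i^{\min}=+\infty$, so \eqref{Bb} cannot hold for $d>0$. This case is excluded by the feasibility hypothesis, which should be made explicit by requiring $p_i>0$ and $b_i>0$. I would also note that the statement claims only that the optimum satisfies \eqref{t_opt}. Uniqueness therefore relies on $p_i>0$, which holds at any feasible point, so no further argument is needed.
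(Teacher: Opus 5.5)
Your proof is correct and takes essentially the same approach as the paper: the objective is increasing in each $t_i$, constraint \eqref{Bb} gives the lower bound $t_i\ge t_i^{\min}$, and the feasibility hypothesis $t_i^{\min}\le\bar t_i$ ensures \eqref{Ba} still holds there, so each $t_i$ is pushed down to $t_i^{\min}$. Your per-user interval decomposition and explicit handling of the degenerate case $p_i=0$ are slightly more careful than the paper's contradiction-style argument, which simply assumes $p_i>0$.
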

\begin{proof}
The objective in problem \eqref{B} is $\sum_i p_i t_i$, which is strictly increasing in each $t_i$ since $p_i>0$.  
Constraint \eqref{Bb} implies the lower bound $t_i\ge t_i^{\min}$ in Eq. \eqref{tmin}. 
Thus any $t_i>t_i^{\min}$ can be decreased to $t_i^{\min}$ while preserving feasibility and strictly reducing the objective, contradicting optimality. Hence $t_i^\star=t_i^{\min}$ componentwise. The stated feasibility condition ensures that Eq. \eqref{Ba} is not violated at $t_i^{\min}$.
\end{proof}


In the second sub-step, given $(\boldsymbol{t},d)$, we solve the $(\boldsymbol{f},\boldsymbol{b},\boldsymbol{p})$. The optimization problem \eqref{A} can be rewritten as
\vspace{-0.48em}
\begin{align}
\min_{\boldsymbol {f}, \boldsymbol {b},\boldsymbol {p}}&
\sum_{i=1}^U \gamma Z_i d f_i^2 + (J_d-1)\gamma G_i d f_i^2 + J_d t_ip_i, \label{C}\\
\text{s.t.}&
\frac{Z_i d + (J_d -1)G_i d}{f_i}+ J_d t_i\le T,\quad
\forall i\in  \mathcal U \tag{\ref{C}{a}}, \label{Ca}\\
&t_ib_i\log_2\Bigl(1+\frac{g_i\,p_i}{N_0b_i}\Bigr)\ge N_id,\quad
\forall i\in  \mathcal U \tag{\ref{C}{b}}, \label{Cb}\\
&\sum_{i=1}^U b_i\le B,\quad\forall i\in  \mathcal U\tag{\ref{C}{c}}, \label{Cc}\\
&0\le p_i\le p_i^{\max},\quad\forall i\in  \mathcal U \tag{\ref{C}{d}}, \label{Cd}\\
&0\le f_i\le f_i^{\max},\quad \forall i\in  \mathcal U \tag{\ref{C}{e}}, \label{Ce}\vspace{-0.48em}
\end{align}
where $Z_i=D_iC_i^{(\text{init})}$ and $G_i=D_iC_i^{(\text{ret})}$ are constants. Since the objective function and constraints can be decoupled, problem \eqref{C} can be reformulated as two subproblems \eqref{D} and \eqref{E}.
\vspace{-0.48em}
\begin{align}
\min_{\boldsymbol {f}}&
\sum_{i=1}^U \left(Z_i  + (J_d -1)G_i\right)\gamma df_i^2 , \label{D}\\
\text{s.t.}&
\frac{Z_i d + (J_d-1)G_i d}{f_i}+ J_d t_i\le T,\quad
\forall i\in  \mathcal U \tag{\ref{D}{a}}, \label{Da}\\
&0\le f_i\le f_i^{\max},\quad \forall i\in  \mathcal U \tag{\ref{D}{b}}.\vspace{-0.48em}
\label{Db}
\end{align}

According to \eqref{D}, the objective function is strictly increasing in each $f_i$, thus the optimal solution is obtained by setting each $f_i$ to its lower bound, which is given as
follows:
\vspace{-0.48em}
\begin{align}
f_i^* &= \frac{Z_i d + (J_d-1)G_i d}{T-J_dt_i},
\quad \forall i\in\mathcal U.
\label{f*}\vspace{-0.48em}
\end{align}


The $(\boldsymbol{b}, \boldsymbol{p})$ optimization problem out of Eq. \eqref{C} can be formulated as:
\vspace{-0.48em}
\begin{align}
\min_{\boldsymbol {b},\boldsymbol {p}}&
\sum_{i=1}^U J_d t_ip_i,\label{E}\\
\text{s.t.}
&t_i\,b_i\log_2\Bigl(1+\frac{g_i\,p_i}{N_0\,b_i}\Bigr)\ge N_id,\quad
\forall i\in  \mathcal U \tag{\ref{E}{a}}, \label{Ea}\\
&\sum_{i=1}^U b_i\le B,\quad
\forall i\in  \mathcal U\tag{\ref{E}{b}}, \label{Eb}\\
&0\le p_i\le p_i^{\max},\quad\forall i\in  \mathcal U \tag{\ref{E}{c}}. \label{Ec}\vspace{-0.48em}
\end{align}

\begin{proposition} At optimality of Eq. \eqref{E}, the rate constraint \eqref{Ea} is tight for every user. Consequently,
\vspace{-0.48em}
\begin{equation}
p_i=\frac{N_0 b_i}{g_i}\left(2^{\frac{N_id}{t_i b_i}}-1\right), \qquad \forall i\in\mathcal U.
\label{p*}\vspace{-0.48em}
\end{equation}
\end{proposition}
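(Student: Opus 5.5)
The plan is a contradiction argument on the rate constraint \eqref{Ea}, followed by inverting the now-active constraint to obtain \eqref{p*}. Throughout, $t_i>0$ is fixed (from Theorem~\ref{theorem1}, $t_i^\star=t_i^{\min}>0$), and the objective $\sum_i J_d t_i p_i$ is strictly increasing in each $p_i$ and does not depend on $\boldsymbol b$.

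First I will record the monotonicity of the rate function. For fixed $b_i>0$, the map $p_i\mapsto t_i b_i\log_2(1+g_ip_i/(N_0b_i))$ is continuous and strictly increasing, and it equals $0$ at $p_i=0$. Since $N_id>0$, feasibility forces $p_i>0$, and also $b_i>0$, because as $b_i\to 0^+$ the rate tends to $0$.

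The main step is showing the constraint is tight. Suppose $(\boldsymbol b^\star,\boldsymbol p^\star)$ is optimal but some user $k$ has
\[
t_k b_k^\star\log_2\!\Bigl(1+\tfrac{g_kp_k^\star}{N_0b_k^\star}\Bigr)>N_kd .
\]
Continuity and strict monotonicity in $p_k$ then give a $p_k'\in(0,p_k^\star)$ at which the constraint holds with equality. Replacing $p_k^\star$ by $p_k'$ leaves every constraint satisfied:
\begin{itemize}
\item \eqref{Ea} holds with equality for user $k$;
\item \eqref{Eb} is unchanged, since $\boldsymbol b$ is untouched;
\item \eqref{Ec} still holds, since $0<p_k'<p_k^\star\le p_k^{\max}$.
\end{itemize}
The objective, however, drops strictly by $J_dt_k(p_k^\star-p_k')>0$, which contradicts optimality. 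Hence \eqref{Ea} is active for every $i\in\mathcal U$.

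It remains to solve the active constraint $t_ib_i\log_2(1+g_ip_i/(N_0b_i))=N_id$ for $p_i$. Dividing by $t_ib_i$ gives $\log_2(1+g_ip_i/(N_0b_i))=N_id/(t_ib_i)$. Exponentiating base $2$ and rearranging yields \eqref{p*} directly.

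The only delicate point is the tightness step. It must avoid altering $\boldsymbol b$, so that the coupling constraint \eqref{Eb} stays intact, and it must check that the reduced power still respects \eqref{Ec}. Both hold because only $p_k$ is decreased. The bandwidth optimization itself is deferred to the subsequent substitution of \eqref{p*} into the objective.
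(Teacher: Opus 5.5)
Your argument is correct: since the objective $\sum_i J_d t_i p_i$ is strictly increasing in each $p_i$ and does not involve $\boldsymbol b$, lowering a slack user's power to the value where \eqref{Ea} binds keeps \eqref{Eb} and \eqref{Ec} satisfied and strictly reduces the cost, and inverting the active constraint then gives \eqref{p*}. The paper states this proposition without a separate proof, but your monotonicity-and-contradiction argument is the same reasoning it uses in the proof of Theorem~\ref{theorem1} and for the tightness of \eqref{Minta}, so your route matches the paper's.
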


Substituting \eqref{p*} into problem \eqref{E} yields a single-variable optimization problem in $\boldsymbol b$.
\vspace{-0.48em}
\begin{align}
\min _{\boldsymbol{b}} & \sum_{i=1}^U \frac{N_0 t_i b_i}{g_i}\left(2^{\frac{N_id}{b_i t_i}}-1\right), \label{F}\\
\text { s.t. } & \sum_{i=1}^U b_i \leq B,\quad \forall i \in \mathcal{U} \tag{\ref{F}{a}}, \label{Fa}\\
& b_i \geq b_i^{\min }, \quad \forall i \in \mathcal{U} \tag{\ref{F}{b}}. \label{Fb}\vspace{-0.48em}
\end{align}
The value $b_i^{\min}$ denotes the minimum required bandwidth to meet the transmission rate constraint when the maximum transmission power is used. Based on constraint \eqref{Ea}, $b_i^{\min}$ can be computed in closed form using the Lambert-W function, which is shown by
\vspace{-0.48em}
\begin{equation}
b_i^{\min}\!
=\!-\frac{(\ln2)N_id}
{t_i\!\mathrm{W}\!\Bigl(\!-\frac{(\ln2)N_0\,N_id}{g_ip_i^{\max}t_i}
e^{\!-\frac{(\ln2)N_0N_id}{g_ip_i^{\max}t_i}}\!\Bigr)\!\!+\!\frac{(\ln2)N_0N_id}{g_ip_i^{\max}}}.  \vspace{-0.48em}
\label{Lambert1}
\end{equation}

By computing the first and second order derivatives of problem \eqref{F}, named $F(b_i)$, we obtain:
\vspace{-0.48em}
\begin{equation}
\frac{\partial F(b_i)}{\partial b_i}\!=\!\frac{N_0 t_i}{g_i}\!\left(\!e^{\!\frac{(\ln 2) N_id}{t_i b_i}}\!\!-\!\frac{(\ln 2) N_id}{t_i b_i} e^{\!\frac{(\ln 2) N_id}{t_i b_i}}\!\!-\!1\right)\!< \!0,    \vspace{-0.48em}
\label{order1}
\end{equation}

\begin{equation}
\frac{\partial^2 F(b_i)}{\partial b_i^2}=\frac{N_0(\ln 2)^2 {N_i}^2 d^2}{g_i t_i b_i^3} e^{\frac{(\ln 2) N_id}{t_i b_i}} > 0.\vspace{-0.48em}
\label{order2}
\end{equation}
According to Eqs. \eqref{order1} and \eqref{order2}, it can be shown that the problem \eqref{F} is monotonically decreasing and convex with respect to $b_i>0$, which can be solved by using the Karush-Kuhn-Tucker (KKT) conditions. The Lagrangian function of \eqref{F} is
\vspace{-0.48em}
\begin{equation}
\mathcal{L}(\boldsymbol{b}, \lambda)=\sum_{i=1}^U \frac{N_0 t_i b_i}{g_i}\left(2^{\frac{N_id}{b_i t_i}}-1\right)+\lambda\left(\sum_{i=1}^U b_i-B\right),\vspace{-0.48em}
\label{KKT}
\end{equation}
where $\lambda$ is the Lagrange multiplier associated with constraint \eqref{Fa}. The first order derivative of $\mathcal{L}(b, \lambda)$ with respect to $b_i$ is
\vspace{-0.48em}
\begin{equation}
\frac{\partial \mathcal{L}(\boldsymbol{b}, \lambda)}{\partial b_i}\!=\!\frac{N_0 t_i}{g_i}\!\left(\!e^{\frac{(\ln 2) N_id}{t_i b_i}}\!\!\!-\!\frac{(\ln 2) N_id}{t_i b_i} e^{\frac{(\ln 2) N_id}{t_i b_i}}\!\!\!-\!1\!\right)\!\!+\!\lambda.   \vspace{-0.48em}
\label{KKT1}
\end{equation}

We define $b_i(\lambda)$ as the unique solution to $\frac{\partial \mathcal{L}(\boldsymbol{b}, \lambda)}{\partial b_i}=0$. Given constraint \eqref{Fb}, the optimal $b_i^*$ can be chosen as
\vspace{-0.48em}
\begin{equation}
b_i^* = \max\bigl\{b_i(\lambda),b_i^{\min}\bigr\},
\quad \forall i\in\mathcal U.\vspace{-0.48em}
\end{equation}
Moreover, the problem
\eqref{F} is a decreasing function of $b_i$, constraint \eqref{Fa} always holds with equality for the optimal solution, which shows that the optimal Lagrange multiplier should satisfies
\vspace{-0.48em}
\begin{equation}
\sum_{i=1}^U \max \left\{b_i(\lambda), b_i^{\min }\right\}=B.
\label{Lag}\vspace{-0.48em}
\end{equation}
Once $b_i^*$ is obtained, the corresponding $p_i^*$ can be computed by \eqref{p*}.

\begin{theorem}
Define $\Phi(\lambda)=\sum_{i\in\mathcal U}\max\{b_i(\lambda),b_i^{\min}\}$.
Then $\Phi(\lambda)$ is continuous and strictly decreasing in $\lambda$.
Therefore there exists a unique $\lambda^\star$ such that $\Phi(\lambda^\star)=B$, and
\vspace{-0.48em}
\[
b_i^\star=\max\{b_i(\lambda^\star), b_i^{\min}\},\quad
p_i^\star\ \text{from~\eqref{p*}},\vspace{-0.48em}
\]
solve problem~\eqref{F} globally.
\end{theorem}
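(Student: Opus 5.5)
The plan is to study the per-user marginal-cost map behind \eqref{KKT1}, show that $b_i(\lambda)$ is a well-defined, continuous, strictly decreasing function of $\lambda$, pass these properties to $\Phi$, and finally certify global optimality by checking the KKT conditions of the convex problem~\eqref{F}. Uniqueness of $\lambda^\star$ will need one qualification, stated below.

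\textbf{Step 1: the function $b_i(\lambda)$.} Write $x_i(b)=\frac{(\ln 2)N_id}{t_ib}$, which maps $(0,\infty)$ onto $(0,\infty)$ and is strictly decreasing in $b$. Set
\[
h_i(b)=\frac{N_0t_i}{g_i}\,\varphi\bigl(x_i(b)\bigr),\qquad \varphi(x)=e^{x}(1-x)-1 .
\]
Since $\varphi'(x)=-xe^{x}<0$ for $x>0$, and $x_i$ is decreasing, $h_i$ is continuous and strictly increasing in $b$; this is consistent with \eqref{order2}. The limits are $\lim_{b\to0^+}h_i(b)=-\infty$ (because $\varphi(x)\to-\infty$ as $x\to\infty$) and $\lim_{b\to\infty}h_i(b)=0^-$ (because $\varphi(0)=0$). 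Hence for every $\lambda>0$ the equation $h_i(b)=-\lambda$ has exactly one root $b_i(\lambda)=h_i^{-1}(-\lambda)$. As the inverse of a continuous strictly monotone function, $b_i(\lambda)$ is continuous and strictly decreasing on $(0,\infty)$, with $b_i(\lambda)\to\infty$ as $\lambda\to0^+$ and $b_i(\lambda)\to0$ as $\lambda\to\infty$.

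\textbf{Step 2: properties of $\Phi$ and existence of $\lambda^\star$.} Each term $\max\{b_i(\lambda),b_i^{\min}\}$ is continuous and nonincreasing in $\lambda$, so $\Phi$ is too. It is strictly decreasing exactly on the set where at least one user has $b_i(\lambda)>b_i^{\min}$, which is precisely the set where $\Phi(\lambda)>\sum_i b_i^{\min}$.

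This is the delicate point. For large $\lambda$ every term is clamped, so $\Phi\equiv\sum_ib_i^{\min}$ there, and the claim "strictly decreasing on all of $(0,\infty)$" fails. I will therefore state and prove strict monotonicity on the range $\{\lambda:\Phi(\lambda)>\sum_ib_i^{\min}\}$, under the feasibility assumption $B>\sum_ib_i^{\min}$. This assumption is exactly the condition checked by the feasibility procedure of Section~\ref{fea}. (In the boundary case $B=\sum_ib_i^{\min}$ the solution $b_i^\star=b_i^{\min}$ is forced, and $\lambda^\star$ need not be unique.)

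Under this assumption we have $\Phi(\lambda)\to\infty$ as $\lambda\to0^+$ and $\Phi(\lambda)=\sum_ib_i^{\min}<B$ for large $\lambda$. The intermediate value theorem then gives some $\lambda^\star>0$ with $\Phi(\lambda^\star)=B$. This $\lambda^\star$ lies in the strictly decreasing range of $\Phi$, so it is unique, and it can be found by bisection.

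\textbf{Step 3: global optimality.} Problem~\eqref{F} has a separable objective that is convex in each $b_i$ by \eqref{order2}, and its constraints are linear. Slater's condition holds when $B>\sum_ib_i^{\min}$, so the KKT conditions are necessary and sufficient for global optimality. I will attach multipliers $\mu_i\ge0$ to the constraints \eqref{Fb} and verify the KKT system at $b_i^\star=\max\{b_i(\lambda^\star),b_i^{\min}\}$:
\begin{itemize}
\item \emph{Primal feasibility.} We have $\sum_i b_i^\star=\Phi(\lambda^\star)=B$ and $b_i^\star\ge b_i^{\min}$ by construction.
\item \emph{Unclamped users.} If $b_i(\lambda^\star)\ge b_i^{\min}$, take $\mu_i=0$. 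Stationarity $h_i(b_i^\star)+\lambda^\star=0$ then holds by definition of $b_i(\lambda^\star)$.
\item \emph{Clamped users.} If $b_i(\lambda^\star)<b_i^{\min}$, then $b_i^\star=b_i^{\min}$, and we set $\mu_i=h_i(b_i^{\min})+\lambda^\star$. Because $h_i$ is increasing, $h_i(b_i^{\min})>h_i(b_i(\lambda^\star))=-\lambda^\star$, so $\mu_i>0$. Complementary slackness holds because this constraint is active.
\item \emph{Bandwidth multiplier.} We have $\lambda^\star>0$ and the bandwidth constraint is tight, so complementary slackness for \eqref{Fa} holds as well.
\end{itemize}
Strict convexity of the objective then makes $\boldsymbol b^\star$ the unique global minimizer. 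Substituting $\boldsymbol b^\star$ into \eqref{p*}, which is valid by the preceding Proposition, yields $\boldsymbol p^\star$. The constraint $b_i^\star\ge b_i^{\min}$ guarantees $p_i^\star\le p_i^{\max}$, by the definition of $b_i^{\min}$ in \eqref{Lambert1} and the fact that the power in \eqref{p*} decreases as $b_i$ grows.

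The main obstacle is the strict-monotonicity claim in Step~2: it holds only on the unclamped range, and the argument must explicitly invoke $B>\sum_ib_i^{\min}$ to get uniqueness.
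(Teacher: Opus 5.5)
Your proposal is correct and follows the same KKT route the paper sketches around the theorem: the per-user cost is convex and strictly decreasing by \eqref{order1}--\eqref{order2}, $b_i(\lambda)$ is defined by stationarity of \eqref{KKT1}, users are clamped at $b_i^{\min}$, and the bandwidth constraint is tight. You also fill in the details the paper omits (the paper gives no formal proof) and rightly observe that $\Phi$ is constant once every user is clamped, so strict monotonicity of $\Phi$ and uniqueness of $\lambda^\star$ hold only under $B>\sum_i b_i^{\min}$.

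One side remark is inaccurate. That condition, with $b_i^{\min}$ evaluated at the current $t_i$ and a strict inequality, is not the certificate of Section~\ref{fea}, which uses $b_i^{\min}(t_i^{\max}(d))$ with $\le$. Moreover, the equality case genuinely occurs right after that initialization (where $p_i=p_i^{\max}$, $\sum_i b_i=B$ and all rates are tight), so your separate treatment of $B=\sum_i b_i^{\min}$ is needed rather than a mere technicality.
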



Algorithm~\ref{alg:enum-alt} concludes that the proposed method solves problem \eqref{A} by alternately optimizing subproblem 1 and subproblem 2. Subproblem 2 is solved by alternately optimizing \eqref{B} and \eqref{C}. Since each subproblem achieves its optimal solution at every iteration, the objective function value of problem \eqref{A} decreases monotonically with each update.

\begin{algorithm}[t]
\caption{Iterative Alternating Optimization}
\label{alg:enum-alt}
\begin{algorithmic}[1]
\State Initialize a feasible solution $(d^{(0)},\boldsymbol{t}^{(0)},\boldsymbol{b}^{(0)},\boldsymbol{f}^{(0)},\boldsymbol{p}^{(0)})$ and set $k=0$.
\Repeat
    \State With given $(\boldsymbol{t}^{(k)},\boldsymbol{b}^{(k)},\boldsymbol{f}^{(k)},\boldsymbol{p}^{(k)})$, obtain the optimal $d^{(k+1)}$ (solve the stationarity condition and project to $\mathcal D_{\text{HDC}}$).
    \State Set $l=0$.
    \Repeat
        \State With given $(\boldsymbol{b}^{(l)},\boldsymbol{f}^{(l)},\boldsymbol{p}^{(l)})$, obtain the optimal $\boldsymbol{t}^{(l+1)}$.
        \State With given $\boldsymbol{t}^{(l+1)}$, obtain the optimal $(\boldsymbol{b}^{(l+1)},\boldsymbol{f}^{(l+1)},\boldsymbol{p}^{(l+1)})$.
        \State Set $l=l+1$.
    \Until the inner objective value converges or $l \ge L_{\mathrm{in}}^{\max}$.
    \State Set $(\boldsymbol{t}^{(k+1)},\boldsymbol{b}^{(k+1)},\boldsymbol{f}^{(k+1)},\boldsymbol{p}^{(k+1)}) \leftarrow (\boldsymbol{t}^{(l)},\boldsymbol{b}^{(l)},\boldsymbol{f}^{(l)},\boldsymbol{p}^{(l)})$.
    \State Set $k=k+1$.
\Until the outer objective value converges or $k \ge L_{\mathrm{out}}^{\max}$.
\State \textbf{Output:} $(d^\star,\boldsymbol{t}^\star,\boldsymbol{b}^\star,\boldsymbol{f}^\star,\boldsymbol{p}^\star)$.
\end{algorithmic}
\end{algorithm}

\subsection{Complexity Analysis}
Let $U$ be the number of users. The algorithm runs $L_{\mathrm{out}}$ outer alternations between updating $d$ and updating $(\boldsymbol{t},\boldsymbol{b},\boldsymbol{f},\boldsymbol{p})$. The $d$-update is a one-dimensional root-finding step with $\mathcal{O}(U)$ cost per function (or derivative) evaluation, and thus costs $\mathcal{O}\big(U L_d\big)$ with $L_d=\mathcal{O}(\log(1/\epsilon_d))$. For a fixed $d$, the resource update performs $L_{\mathrm{in}}$ inner alternations, where the bandwidth KKT procedure dominates: bisection finds $\lambda^\star$ in $L_\lambda=\mathcal{O}(\log(1/\epsilon_\lambda))$ iterations, and each $\lambda$-iteration computes ${b_i(\lambda)}{i=1}^{U}$ by solving $U$ monotone scalar equations in $L_b=\mathcal{O}(\log(1/\epsilon_b))$ steps \cite{1664999}. Hence, each inner iteration costs $\mathcal{O}(U L_b L\lambda)=\mathcal{O}\big(U\log(1/\epsilon_b)\log(1/\epsilon_\lambda)\big)$, and the overall complexity is $\mathcal{O}\Big(L_{\mathrm{out}}U\big(\log(1/\epsilon_d)+L_{\mathrm{in}}\log(1/\epsilon_b)\log(1/\epsilon_\lambda)\big)\Big)$.

\section{Algorithm to Finding a Feasible Solution of Problem \eqref{A}}\label{fea}
Since the proposed alternating energy-minimization algorithm requires a feasible starting point, this section presents an initialization procedure under the bandwidth budget $B$ and time limit $T$. 



\subsection{Dimension Initialization and Feasibility Certificate}
We choose an initial $d^{(0)}\in\mathcal D_{\text{HDC}}$ (e.g., a moderate value, or the smallest value in $\mathcal D_{\text{HDC}}$). Given $d^{(0)}$, the corresponding convergence round number $ J_d(d^{(0)})$ is obtained from Eq. \eqref{J_d}.

To certify feasibility under the CPU frequency cap, we define the maximum admissible per-round communication time $t_i^{\max}(d)$ as:
\vspace{-0.48em}
\begin{equation}
t_i^{\max}(d)
=
\frac{
T-\frac{Z_i d+(J_d-1)G_i d}{f_i^{\max}}
}{
J_d
},\quad \forall i\in\mathcal{U}.\vspace{-0.48em}
\label{eq:tmax1}
\end{equation}
If $t_i^{\max}(d)\le 0$ for some $i$, then the instance is infeasible for this $d$. 

Next, fix $p_i=p_i^{\max}$ and let $b_i^{\min}(t(d))$ denote the minimum bandwidth required to satisfy the rate constraint within a transmission time $t$ for dimension $d$, which can be obtained in closed form via the Lambert W function \eqref{Lambert1}. We then have the following proposition.

\begin{proposition}
\label{prop:feas_cert}
Given $(B,T)$ and per-user limits $\{f_i^{\max},p_i^{\max}\}$, the constraint set of problem \eqref{A} is nonempty for a fixed $d$ if and only if
\begin{enumerate}
    \item $t_i^{\max}(d) > 0$, $\forall i\in\mathcal{U}$; and
    \item $\sum_{i=1}^{U} b_i^{\min}(t_i^{\max}(d)) \le B$.
\end{enumerate}
\end{proposition}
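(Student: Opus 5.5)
We prove both directions by reducing feasibility of \eqref{A} at fixed $d$ to a statement about how much bandwidth each user needs. The tool is the monotonicity of the rate function $r_i(b_i,p_i)=b_i\log_2(1+p_ig_i/(N_0b_i))$. It is increasing in $p_i$. It is also strictly increasing in $b_i$ for fixed $p_i>0$, since its derivative in $b_i$ is $\log_2(1+x)-x/((1+x)\ln 2)>0$ with $x=p_ig_i/(N_0b_i)$. Because $r_i$ increases in both arguments, the required bandwidth $b_i^{\min}(t)$ defined via \eqref{Lambert1} is decreasing in $t$. It is also the smallest bandwidth achieving $t\,r_i\ge N_id$ under any admissible power, because $p_i\le p_i^{\max}$ gives $r_i(b,p_i)\le r_i(b,p_i^{\max})$.

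\emph{Necessity.} Take any feasible $(\boldsymbol t,\boldsymbol b,\boldsymbol p,\boldsymbol f)$. Constraint \eqref{Ab} with $N_id>0$ forces $t_i>0$ and $b_i>0$. Constraint \eqref{Aa} combined with $f_i\le f_i^{\max}$ gives $J_dt_i\le T-\frac{Z_id+(J_d-1)G_id}{f_i}\le J_d\,t_i^{\max}(d)$. Hence $t_i^{\max}(d)\ge t_i>0$, which is condition 1. For condition 2, \eqref{Ab} and the monotonicity above give $t_i^{\max}\,r_i(b_i,p_i^{\max})\ge t_i\,r_i(b_i,p_i)\ge N_id$. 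By definition of $b_i^{\min}$ this means $b_i\ge b_i^{\min}(t_i^{\max}(d))$. Summing over users and applying \eqref{Ac} yields $\sum_i b_i^{\min}(t_i^{\max}(d))\le B$.

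\emph{Sufficiency.} We construct a feasible point explicitly. Set $f_i=f_i^{\max}$, $t_i=t_i^{\max}(d)>0$, $p_i=p_i^{\max}$ and $b_i=b_i^{\min}(t_i^{\max}(d))$. Any leftover bandwidth can be distributed arbitrarily, since a larger $b_i$ only raises the rate. With these choices \eqref{Aa} holds with equality by the definition \eqref{eq:tmax1}. Constraint \eqref{Ab} holds by the definition of $b_i^{\min}$, and \eqref{Ac} holds by condition 2. The box constraints \eqref{Ae}--\eqref{Ah} are immediate.

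\emph{Main obstacle.} The delicate step is justifying that $b_i^{\min}(t)$ is well defined and finite for every $t>0$, so that \eqref{Lambert1} really gives the minimal bandwidth. As $b\to\infty$, $r_i(b,p_i^{\max})$ increases to the finite limit $p_i^{\max}g_i/(N_0\ln 2)$. Therefore a solution exists only when $t>N_idN_0\ln 2/(g_ip_i^{\max})$. We would handle this first by adopting the convention $b_i^{\min}(t)=+\infty$ below that threshold. Both directions remain valid under this convention, because condition 2 then fails exactly when no finite bandwidth suffices. We would also check that strict monotonicity of $r_i$ in $b$ makes the minimal bandwidth unique, and that it coincides with the principal-branch Lambert-W expression.
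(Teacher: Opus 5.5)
Your necessity direction is the paper's argument, done more carefully. The paper also bounds $t_i\le t_i^{\max}(d)$ through $f_i\le f_i^{\max}$ and then sums $b_i\ge b_i^{\min}(t_i^{\max}(d))$. However, it uses without comment that $b_i^{\min}$ is computed at $p_i^{\max}$; you justify this by the monotonicity of $r_i$ in $p_i$. Its claim that computation time ``exceeds'' the budget when $t_i^{\max}(d)\le 0$ also misses the boundary case $t_i^{\max}(d)=0$. Your observation that \eqref{Ab} forces $t_i>0$ covers that case.

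The real difference is sufficiency. The paper's proof stops after necessity, so the ``if'' half of the equivalence is never argued there. It is supported only indirectly by Proposition~\ref{prop:init_bptf_feasible}, which builds a feasible point by solving \eqref{Mint1}. Your explicit point $f_i=f_i^{\max}$, $t_i=t_i^{\max}(d)$, $p_i=p_i^{\max}$, $b_i=b_i^{\min}(t_i^{\max}(d))$ settles that direction in one line without any optimization. Your convention $b_i^{\min}(t)=+\infty$ for $t\le(\ln 2)N_0N_id/(g_ip_i^{\max})$ is also needed. It is what makes condition~2 meaningful when $t_i^{\max}(d)$ is positive but too small, a case the paper never addresses.

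One correction to your closing remark: the minimal bandwidth comes from the lower branch $W_{-1}$ in \eqref{Lambert1}, not the principal branch. With $k=(\ln 2)N_0N_id/(g_ip_i^{\max}t)$, the argument of $W$ is $-ke^{-k}$.
\begin{itemize}
\item For $k<1$, $W_0(-ke^{-k})=-k$, so the denominator of \eqref{Lambert1} vanishes; this is the spurious root $b=\infty$. By contrast, $W_{-1}(-ke^{-k})<-1$ gives the finite positive root.
\item For $k>1$, the principal branch returns a negative bandwidth, which would make condition~2 spuriously easy to satisfy.
\end{itemize}
Your proof only uses the defining minimality of $b_i^{\min}$, so it is unaffected, but the consistency check as you phrased it would fail.
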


\begin{proof}
If \eqref{eq:tmax1} yields $t_i^{\max}(d)\le 0$ for some $i$, then even with $f_i=f_i^{\max}$ the local computation time already exceeds the available budget, hence infeasible. Otherwise, $t_i\le t_i^{\max}(d)$ must hold for feasibility (since $f_i\le f_i^{\max}$). For any such $t_i$, the rate constraint requires $b_i\ge b_i^{\min}(t_i(d))$, in particular, $b_i\ge b_i^{\min}(t_i^{\max}(d))$. Therefore, feasibility implies $\sum_i b_i^{\min}(t_i^{\max}(d))\le B$.
\end{proof}

If either condition fails, the $d$ is infeasible. If no candidate in $\mathcal D_{\text{HDC}}$ passes the certificate, the instance is infeasible for the given $(B, T)$.


\subsection{Initialization of $(\boldsymbol{b},\boldsymbol{p},\boldsymbol{t},\boldsymbol{f})$ for a Given $d$}
\label{subsec:init_bptf_given_d}
Given a dimension $d$ that passes the feasibility certificate in Proposition~\ref{prop:feas_cert}, we construct an initial resource point $(\boldsymbol{b}^{(0)},\boldsymbol{p}^{(0)},\boldsymbol{t}^{(0)},\boldsymbol{f}^{(0)})$ as follows. 

We allocate bandwidth by minimizing the total transmission time subject to the bandwidth budget and per-user feasibility lower bounds. In particular, with $p_i^{(0)}=p_i^{\max}$, we solve
\vspace{-0.48em}
\begin{align}
&\min_{\boldsymbol b,\boldsymbol t}\quad
\sum_{i=1}^U t_i \label{Mint}\\
\text{s.t.}\quad
&t_ib_i\log_2\left(1+\frac{c_i}{b_i}\right)\geq N_id, \forall i\in  \mathcal U \tag{\ref{Mint}{a}}, \label{Minta}\\
& \sum_{i=1}^U b_i \leq B, \tag{\ref{Mint}{b}} \label{Mintb}\\
& b_i \ge b_i^{\min}\big(t_i^{\max}(d)\big),\ t_i\ge 0,\ \forall i\in\mathcal{U}
\tag{\ref{Mint}{c}},\label{Mintc}\vspace{-0.48em}
\end{align}
where $c_i= \frac{p_i^{\max}g_i}{N_0}$. Since the objective $\sum_i t_i$ is strictly increasing in each $t_i$, the rate constraint \eqref{Minta} is tight at optimum. Therefore, for any given $b_i>0$, the optimal transmission time is
$t_i(b_i)=\frac{N_id}{b_i \log_2\!\left(1+\frac{c_i}{b_i}\right)}$. Therefore, problem \eqref{Mint} can be rewritten as:
\vspace{-0.48em}
\begin{align}
  &\min_{\boldsymbol b}\quad
   \sum_{i=1}^U \frac{N_id}{b_i \log _2\left(1+c_i / b_i\right)} \label{Mint1}
  \\
  \text { s.t. }&\eqref{Mintb}-\eqref{Mintc}
\tag{\ref{Mint1}{a}}.
\label{Mint1a}\vspace{-0.48em}
\end{align}

\begin{lemma}
\label{lem:b_init_kkt}
The function $\phi_i(b_i)= \frac{N_id}{b_i\log_2\left(1+\frac{c_i}{b_i}\right)}$ is convex and strictly decreasing in $b_i>0$. Hence, problem~\eqref{Mint1} is convex and admits a unique optimal solution $\boldsymbol{b}^{(0)}$ characterized by 
\vspace{-0.48em}
\begin{equation}
\phi_i'(b_i^{(0)})\!+\!\lambda^\star\!=\!0,\,\, b_i^{(0)}\! >\! b_i^{\min}\big(t_i^{\max}(d)\big),\, \,\sum_{i=1}^U b_i^{(0)}\! =\! B,
\label{eq:kkt_primal_lb}\vspace{-0.48em}
\end{equation}
where \(\lambda^\star\ge 0\) is the Lagrange multiplier.
\end{lemma}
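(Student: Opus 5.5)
The plan is to treat $\phi_i$ as the reciprocal of a positive, strictly concave, increasing function, and then read off the KKT system from convexity and the linearity of the constraints. Write $\phi_i(b_i)=(\ln 2)N_id/h_i(b_i)$ with $h_i(b)=b\ln(1+c_i/b)$ for $b>0$. Setting $x=c_i/b$, direct differentiation gives
\begin{equation*}
h_i'(b)=\ln(1+x)-\frac{x}{1+x}>0,\qquad h_i''(b)=-\frac{c_i^2}{b(b+c_i)^2}<0 .
\end{equation*}
The first sign follows from the elementary inequality $\ln(1+x)>x/(1+x)$ for $x>0$. Alternatively, concavity is immediate because $h_i$ is the perspective of the concave map $y\mapsto\ln(1+c_iy)$. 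Hence $h_i$ is positive, strictly increasing and strictly concave on $(0,\infty)$. It follows at once that $\phi_i$ is strictly decreasing. For convexity I will compute
\begin{equation*}
\left(\frac{1}{h_i}\right)''=\frac{2(h_i')^2}{h_i^3}-\frac{h_i''}{h_i^2}>0 ,
\end{equation*}
which shows that $\phi_i$ is in fact strictly convex. This is also the standard fact that a convex decreasing function ($y\mapsto 1/y$) composed with a positive concave function is convex.

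Next, the objective of problem \eqref{Mint1} is a sum of strictly convex functions of separate variables, so it is strictly convex in $\boldsymbol b$. The feasible set defined by \eqref{Mintb}--\eqref{Mintc} is a polytope (a box intersected with a halfspace). It is nonempty by condition 2) of Proposition~\ref{prop:feas_cert}, and it is compact. Moreover, $b_i^{\min}(t_i^{\max}(d))>0$, so $\phi_i$ is continuous on the feasible set. A minimizer therefore exists, and strict convexity makes it unique. Since all constraints are affine, the linearity constraint qualification holds, so the KKT conditions are necessary and sufficient. No Slater argument is needed, which also covers the degenerate case $\sum_i b_i^{\min}=B$.

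Writing the Lagrangian with multiplier $\lambda\ge0$ for \eqref{Mintb} and $\mu_i\ge0$ for the lower bounds gives $\phi_i'(b_i)+\lambda-\mu_i=0$ together with complementary slackness. Because each $\phi_i$ is strictly decreasing, the budget must be tight. Otherwise, increasing any single $b_i$ would keep feasibility and strictly lower the objective. Hence $\sum_i b_i^{(0)}=B$, and $\lambda^\star=\mu_i-\phi_i'(b_i)>0$ whenever some lower bound is slack. On users whose lower bound is inactive, $\mu_i=0$ and we obtain $\phi_i'(b_i^{(0)})+\lambda^\star=0$. Since $-\phi_i'$ is continuous and strictly decreasing, this inverts to a unique $b_i(\lambda)$. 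Mirroring the earlier Theorem on $\Phi(\lambda)$, the solution is $b_i^{(0)}=\max\{b_i(\lambda^\star),b_i^{\min}\}$, and $\lambda^\star$ is found by bisection on $\sum_i\max\{b_i(\lambda),b_i^{\min}\}=B$.

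The main obstacle is the strict inequality $b_i^{(0)}>b_i^{\min}(t_i^{\max}(d))$ as stated in \eqref{eq:kkt_primal_lb}. Convexity alone does not exclude an active lower bound for a user with steep $\phi_i$. My plan is first to try to show that $\mu_i=0$ at the optimum, arguing that moving bandwidth toward such a user cannot increase $\sum_i t_i$. If that fails in general, I will present \eqref{eq:kkt_primal_lb} as the characterization for users whose lower bound is inactive, and state the $\max\{\cdot\}$ form above as the exact KKT solution.
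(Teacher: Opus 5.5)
Where the lemma is true, your argument is correct and follows the paper's route. The paper's proof says convexity and monotonicity ``follow from differentiation,'' that \eqref{Mint1} is a convex program so KKT is necessary and sufficient, that strict convexity gives uniqueness, and that strict decrease of $\phi_i$ makes the budget tight. You supply what it omits. You give the explicit sign of $(1/h_i)''$, and you get existence from compactness of the nonempty feasible polytope. You also give the right reason KKT is necessary: the constraints are affine, whereas mere convexity would need a qualification such as Slater, which fails when $\sum_i b_i^{\min}=B$.

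On the strict inequality in \eqref{eq:kkt_primal_lb} (writing $b_i^{\min}$ for $b_i^{\min}(t_i^{\max}(d))$), your suspicion is right, so abandon the first plan. $\mu_i=0$ cannot be shown for all $i$ because it is false, and the paper's proof never addresses it.

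\begin{itemize}
\item Your own degenerate case $\sum_i b_i^{\min}=B$, allowed by condition 2) of Proposition~\ref{prop:feas_cert}, has the single feasible point $b_i=b_i^{\min}$ for all $i$.
\item It also fails with slack. Take $U=2$, $N_1d=N_2d=1$, $c_1=c_2=1$ (normalized units), and computation loads chosen so that $t_i^{\max}(d)=\phi_i(b_i^{\min})$ with $b_1^{\min}=50$, $b_2^{\min}=0.1$, and $B=51$. Every feasible point has $b_1\ge 50$ and $b_2\le 1$. Since $\phi_i'$ is increasing, $-\phi_1'(b_1)\le-\phi_1'(50)\approx 1.4\times 10^{-4}$, while $-\phi_2'(b_2)\ge-\phi_2'(1)=1-\frac{1}{2\ln 2}\approx 0.28$. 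Shifting bandwidth from user~1 to user~2 therefore always lowers $\sum_i t_i$. Hence $\boldsymbol b^{(0)}=(50,1)$, user~1's lower bound is active, and $\phi_1'(b_1^{(0)})+\lambda^\star\neq 0$.
\end{itemize}

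The correct characterization is your fallback:
\begin{itemize}
\item $b_i^{(0)}=\max\{b_i(\lambda^\star),b_i^{\min}\}$ with $\sum_i b_i^{(0)}=B$;
\item stationarity $\phi_i'(b_i^{(0)})+\lambda^\star=0$ holds only where $b_i^{(0)}>b_i^{\min}$;
\item $\mu_i=\phi_i'(b_i^{\min})+\lambda^\star\ge 0$ where the bound is active.
\end{itemize}
Nothing downstream is lost, because Proposition~\ref{prop:init_bptf_feasible} only uses $b_i^{(0)}\ge b_i^{\min}$.

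One caution when you mirror the earlier $\Phi(\lambda)$ theorem. Here $\Phi$ is only nonincreasing: it equals $\sum_i b_i^{\min}$ for every $\lambda\ge\max_i\{-\phi_i'(b_i^{\min})\}$. So $\lambda^\star$ need not be unique in the degenerate case. Take uniqueness of $\boldsymbol b^{(0)}$ from strict convexity rather than from $\lambda^\star$; bisection still returns a valid root.
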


\begin{proof}
Convexity and monotonicity of $\phi_i(\cdot)$ follow from differentiation. Since the constraints in \eqref{Mint1} are convex, \eqref{Mint1} is a convex program and the KKT conditions are necessary and sufficient. Uniqueness follows from strict convexity of $\phi_i(\cdot)$ on $b_i>0$. Finally, because $\phi_i(\cdot)$ is strictly decreasing, the sum-bandwidth constraint is tight at optimum.
\end{proof}

After obtaining $\boldsymbol{b}^{(0)}$, we then compute the transmission time by the tight rate constraint:
\vspace{-0.48em}
\begin{equation}
t_i^{(0)}=\frac{N_id}{b_i^{(0)} \log_2\!\left(1+\frac{c_i}{b_i^{(0)}}\right)},\quad \forall i\in\mathcal{U},
\vspace{-0.48em}\label{eq:t_init}
\end{equation}
and set the CPU frequency by enforcing the time constraint with equality:
\vspace{-0.48em}
\begin{equation}
f_i^{(0)}=\frac{Z_i d+(J_d-1)G_i d}{T-J_d\,t_i^{(0)}},\quad \forall i\in\mathcal{U}.
\label{eq:f_init}\vspace{-0.48em}
\end{equation}

\begin{proposition}
\label{prop:init_bptf_feasible}
If $d^{(0)}$ satisfies Proposition~\ref{prop:feas_cert} and $\boldsymbol{b}^{(0)}$ is obtained by solving \eqref{Mint1} with $p_i^{(0)}=p_i^{\max}$, then the constructed $(\boldsymbol{b}^{(0)},\boldsymbol{p}^{(0)},\boldsymbol{t}^{(0)},\boldsymbol{f}^{(0)})$ is feasible for Problem \eqref{A} under dimension $d^{(0)}$.
\end{proposition}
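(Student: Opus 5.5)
We check the constraints of Problem \eqref{A} one at a time at the constructed point $(d^{(0)},\boldsymbol{b}^{(0)},\boldsymbol{p}^{(0)},\boldsymbol{t}^{(0)},\boldsymbol{f}^{(0)})$. The dimension constraint \eqref{Ad} holds because $d^{(0)}\in\mathcal D_{\text{HDC}}$ by choice. The power constraint \eqref{Ae} holds because $p_i^{(0)}=p_i^{\max}\in[0,P_{\max}]$. The bandwidth constraints \eqref{Ac} and \eqref{Ag} hold because $\boldsymbol{b}^{(0)}$ is feasible for \eqref{Mint1}: Lemma~\ref{lem:b_init_kkt} gives $\sum_i b_i^{(0)}=B$ and $b_i^{(0)}\ge b_i^{\min}(t_i^{\max}(d))>0$. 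Such a feasible point exists by condition 2 of Proposition~\ref{prop:feas_cert}, since the lower bounds already sum to at most $B$. The rate constraint \eqref{Ab} holds with equality, because $t_i^{(0)}$ in \eqref{eq:t_init} is defined by the tight rate equation with $p_i=p_i^{\max}$. The same formula gives $t_i^{(0)}>0$, so \eqref{Ah} holds.

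The substantive step is showing $t_i^{(0)}\le t_i^{\max}(d)$. For this we use the monotonicity part of Lemma~\ref{lem:b_init_kkt}: $\phi_i$ is strictly decreasing in $b_i$. By the definition of $b_i^{\min}(\cdot)$ through \eqref{Lambert1} with $p_i=p_i^{\max}$, we have $\phi_i\big(b_i^{\min}(t_i^{\max}(d))\big)=t_i^{\max}(d)$, because the rate constraint is met with equality at the minimal bandwidth. Combining this with $b_i^{(0)}\ge b_i^{\min}(t_i^{\max}(d))$ gives
\begin{equation*}
t_i^{(0)}=\phi_i\big(b_i^{(0)}\big)\le \phi_i\big(b_i^{\min}(t_i^{\max}(d))\big)=t_i^{\max}(d).
\end{equation*}
The main obstacle is justifying rigorously that the Lambert-W expression returns exactly the bandwidth solving $\phi_i(b)=t$. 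We would argue this by rewriting $t\,b\log_2(1+c_i/b)=N_id$ in the form $we^{w}=\text{const}$, choosing the branch that yields a positive $b$, and checking that $b\log_2(1+c_i/b)$ is strictly increasing in $b$ so the root is unique.

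Next we check the frequency and latency constraints. Condition 1 of Proposition~\ref{prop:feas_cert} together with $t_i^{(0)}\le t_i^{\max}(d)$ gives
\begin{equation*}
T-J_dt_i^{(0)}\ge T-J_dt_i^{\max}(d)=\frac{Z_id+(J_d-1)G_id}{f_i^{\max}}>0.
\end{equation*}
Hence $f_i^{(0)}$ in \eqref{eq:f_init} is well defined and positive. Inverting the display then yields $f_i^{(0)}\le f_i^{\max}$, which is \eqref{Af}. Finally, \eqref{Aa} holds with equality by the construction of $f_i^{(0)}$, using $Z_i=D_iC_i^{(\text{init})}$ and $G_i=D_iC_i^{(\text{ret})}$. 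All constraints are satisfied, which proves the proposition.
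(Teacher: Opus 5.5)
Your proof is correct and follows the paper's argument: the power and bandwidth constraints come from the construction and \eqref{Mintb}, the rate constraint is tight by \eqref{eq:t_init}, $t_i^{(0)}\le t_i^{\max}(d)$ follows from \eqref{Mintc} via the monotonicity of $\phi_i$ and the definition of $b_i^{\min}(\cdot)$, and $f_i^{(0)}\le f_i^{\max}$ follows by substituting into \eqref{eq:f_init}. Two small remarks. First, you additionally check \eqref{Aa}, \eqref{Ad}, \eqref{Ah} and that $f_i^{(0)}$ is well defined, which the paper leaves implicit. Second, the Lambert-W ``obstacle'' is not needed here: $b_i^{\min}(t)$ is defined as the minimal bandwidth meeting the rate constraint, so $\phi_i(b_i^{\min}(t))=t$ by strict monotonicity, and the Lambert-W expression (on the $W_{-1}$ branch) is only its closed form.
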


\begin{proof}
Since $p_i^{(0)}=p_i^{\max}$ by initialization and Eq. \eqref{Mintb} ensures $\sum_i b_i^{(0)}\le B$, the power and sum-bandwidth constraints are satisfied.
Eq.~\eqref{eq:t_init} enforces the rate constraint with equality. Moreover, from Eq. \eqref{Mintc} and the definition of $b_i^{\min}(\cdot)$, the induced transmission time satisfies $t_i^{(0)}\le t_i^{\max}(d)$ for all $i$. Therefore,
$
T-J_d t_i^{(0)} \ge T-J_d t_i^{\max}(d)
= \frac{Z_i d+(J_d-1)G_i d}{f_i^{\max}},
$
and substituting into \eqref{eq:f_init} yields $f_i^{(0)}\le f_i^{\max}$. Hence, all constraints in problem~\eqref{A} are satisfied.
\end{proof}

\subsection{Complexity
Analysis}\label{Complexity Analysis}
For a given candidate $d$, computing ${t_i^{\max}(d)}$ and ${b_i^{\min}(t_i^{\max}(d))}$ for all users costs $\mathcal{O}(U)$. The dominant cost is solving the bandwidth initialization problem~\eqref{Mint1}. By Lemma~\ref{lem:b_init_kkt}, it admits a KKT characterization with a scalar multiplier $\lambda^\star$, found by bisection in $L_{\lambda_1}=\mathcal{O}(\log(1/\epsilon_{\lambda_1}))$ iterations \cite{1664999}. In each iteration, ${b_i(\lambda)}$ are obtained by solving $U$ monotone scalar equations to accuracy $\epsilon_{b_1}$ in $L_{b_1}=\mathcal{O}(\log(1/\epsilon_{b_1}))$ steps, yielding complexity $\mathcal{O}(U L_{b_1} L_{\lambda_1})$. After $\boldsymbol{b}^{(0)}$ is obtained, updating $\boldsymbol{t}^{(0)}$ and $\boldsymbol{f}^{(0)}$ is closed-form and costs $\mathcal{O}(U)$. Therefore, the initialization complexity for a given $d$ is $\mathcal{O}(U L_{b_1} L_{\lambda_1})$, and checking $L_{d_1}$ candidate dimensions in $\mathcal{D}$ leads to $\mathcal{O}(L_{d_1}U L_{b_1} L_{\lambda_1})$.

\section{Simulation Results and Analysis}\label{simulation}

\subsection{Simulation Setup}
We consider a secure FL-HDC system in which $K=50$ users are independently and uniformly distributed over a circular area of radius $500$\,m, with the BS located at the center. The wireless link adopts a standard large-scale path-loss model $L\left(d_i\right)=128.1+37.6 \log _{10}\left(d_i\right)$ \cite{9264742}. Uplink access is orthogonal within a total bandwidth $B$, and each user's transmit power is constrained by $P_{\max}$. Local computation energy is modeled using a standard frequency–energy coupling surrogate that captures the fact that lowering the CPU frequency saves energy at the expense of longer execution time \cite{7572018}. The per-dimension cycle costs in the HDC pipeline: encoding, first-round class aggregation, similarity evaluation, and error-driven update are denoted $C_{\mathrm{enc}}$, $C_{\mathrm{agg}}$, $C_{\mathrm{sim}}$, and $C_{\mathrm{up}}$, respectively. All system and computation parameters are summarized in Table~\ref{tab:sim_params} and are used as defaults unless otherwise specified.

\begin{table}[!t]
\vspace{-0.68em}
\centering
\caption{Simulation parameters}
\vspace{-0.68em}
\label{tab:sim_params}
\renewcommand{\arraystretch}{1.1}
\begin{tabular}{l l}
\hline
\textbf{Parameters} & \textbf{Value} \\
\hline
$K$: number of users & $50$ \\
$C_{\mathrm{enc}}$: encoding cycles per dimension & $28*28*2$  \\
$C_{\mathrm{agg}}$:  1st-round aggregation cycles per dimension& $28*28*2$ \\
$C_{\mathrm{sim}}$: similarity evaluation cycles per dimension & $10*10$  \\
$C_{\mathrm{up}}$: error-driven update cycles per dimension& $8$  \\
$e_i$: inference error ratio for user $i$ & $0.4$ \\
$d$ : HV dimension range & 3000:1000:10000\\
$\eta$: learning rate of HDC model & $ 1$\\
$T$: total time budget& 30 s \\
$f_i^{\max}$: maximum CPU frequency of user $i$  & $2.3$ GHz \\
$\gamma$: switched-capacitance coefficient & $1\times10^{-28}$ \\
$B$: total bandwidth & $10$ MHz \\
$P_{\max}$: maximum transmit power & $1$ mW \\
$n_0$: noise power spectral density & $-174$ dBm/Hz \\
\hline
\end{tabular}
\vspace{-1.68em}
\end{table}

Our simulation comprises three parts. 
\begin{itemize}
    \item We evaluate the performance of the proposed FL-HDC-DP framework, analyze its convergence and the impact of the HV dimension. Moreover, comparing it with an FL-NN-DP baseline under different privacy budgets and data distributions.
    \item We fit the relationship between $d$ and the number of convergence rounds $J_d$ using the sigmoid-variant model to validate its accuracy and suitability for HDC convergence efficiency under target accuracy and privacy budget.  
    \item We assess the advantages of the proposed joint dimension, bandwidth, transmit power, transmission time, and CPU frequency optimization method in reducing total energy within the FL-HDC-DP framework while satisfying accuracy and privacy requirements. 
\end{itemize}

\subsection{Performance of FL-HDC-DP}
For the performance study, both IID and non-IID MNIST client data are considered. Under IID, the training set is uniformly split across $U=50$ users (1200 samples each). Under non-IID \cite{mcmahan2017communication}, the 60{,}000 training images are label-sorted and partitioned into $3U$ shards (400 images per shard), with each client randomly assigned three shards without replacement. Client uploads are protected by a Gaussian DP mechanism with $\varepsilon\in\{10,20\}$ and $\delta=10^{-5}$, where the noise is pre-calibrated and not optimized. 

Fig.~\ref{fig:acc} illustrates the accuracy versus epochs under IID data for different $d$, highlighting the impact of both dimensionality and DP noise on convergence speed and final accuracy. Fig.~\ref{fig:acc}\subref{fig:acc_FL-HDC} shows the results of FL-HDC without DP noise: as $d$ increases from 3000 to 10000, the accuracy improves from about 92\% to approximately 95\%, and the number of epochs required to reach the 92\% threshold decreases from around 20 to about 10. Fig.~\ref{fig:acc}\subref{fig:acc_FL-HDC-DP} presents the results of FL-HDC-DP with DP noise ($\varepsilon=20$): the overall accuracy drops and becomes more oscillatory, but larger $d$ remains consistently beneficial, achieving faster convergence and higher accuracy at the same epoch (e.g., reaching 88\% requires about 55\% fewer epochs at $d=10000$ than at $d=3000$). Overall, higher-dimensional HVs enhance representation and reduce inter-class interference, but they also increase computation and communication overhead, requiring a practical trade-off between accuracy and cost while ensuring privacy.



\begin{figure}[!t]
  \vspace{-1.28em}
  \centering
  \subfloat[Without DP noise]{%
    \includegraphics[width=0.485\linewidth]{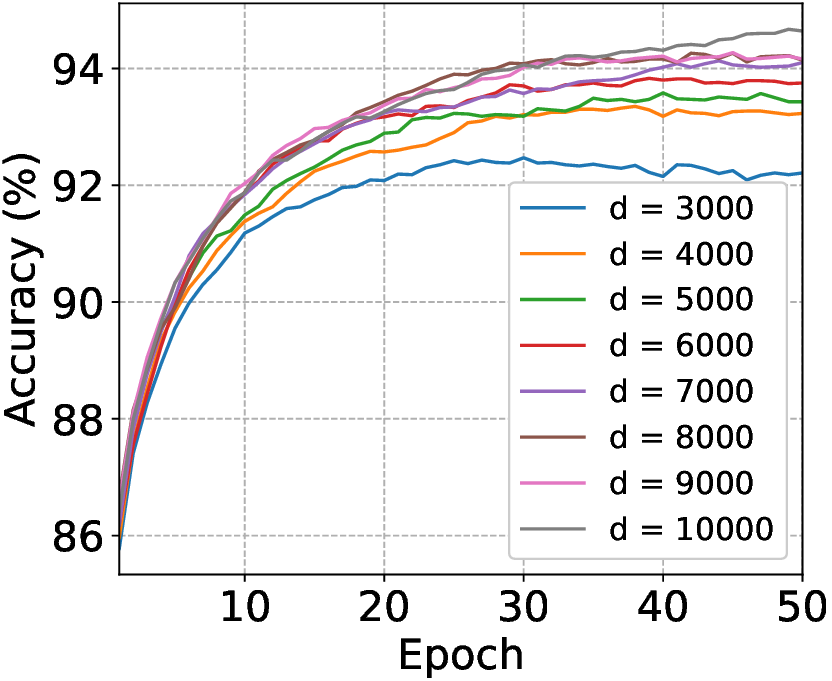}%
    \label{fig:acc_FL-HDC}
  }
  \hfill
  \subfloat[With DP noise ($\epsilon = 20$)]{%
    \includegraphics[width=0.485\linewidth]{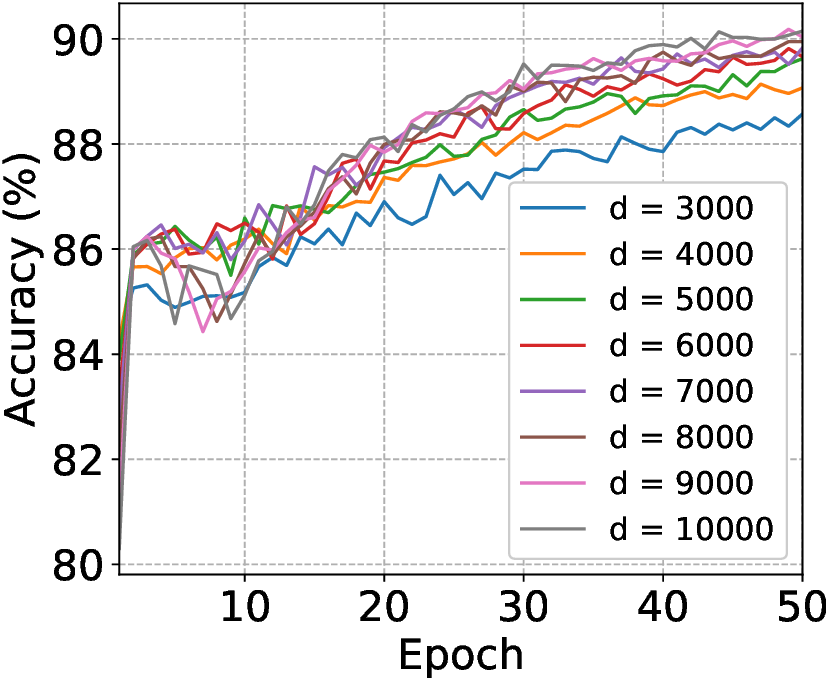}%
    \label{fig:acc_FL-HDC-DP}
  }
  \vspace{-0.48em}
  \caption{Accuracy vs. Epochs of FL-HDC under IID data with different hypervector dimensions: (a) without DP noise and (b) with DP noise ($\epsilon = 20$).}
  \vspace{-1.28em}
  \label{fig:acc}
\end{figure}



\begin{figure}[!t]
  \vspace{-0.6em}
  \centering
  \subfloat[IID and non-IID data, ($\epsilon = 20$)]{%
    \includegraphics[width=0.485\linewidth]{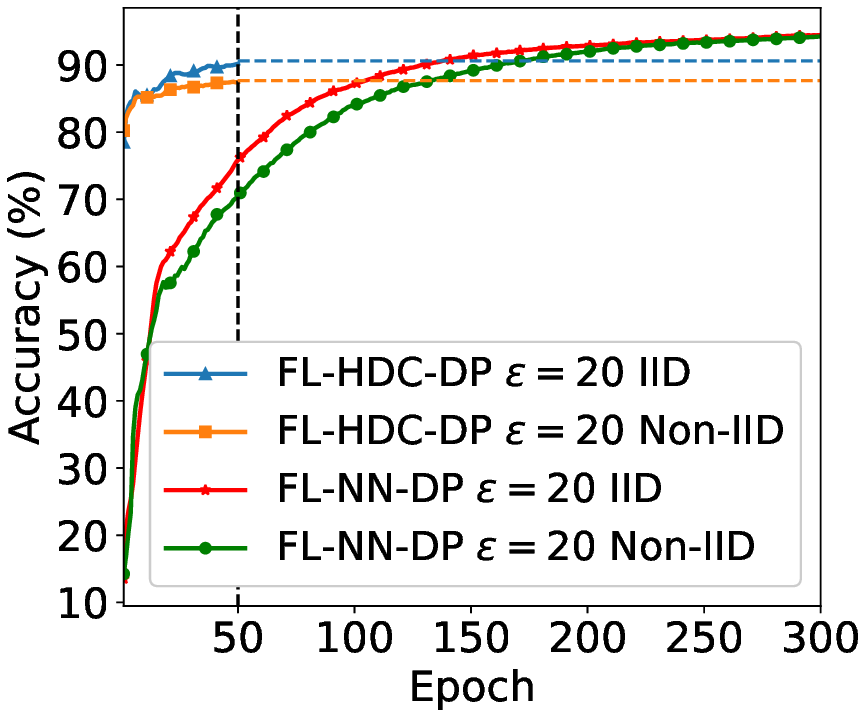}%
    \label{fig:acc1}
  }
  \hfill
  \subfloat[non-IID data, ($\epsilon =10, 20$)]{%
    \includegraphics[width=0.485\linewidth]{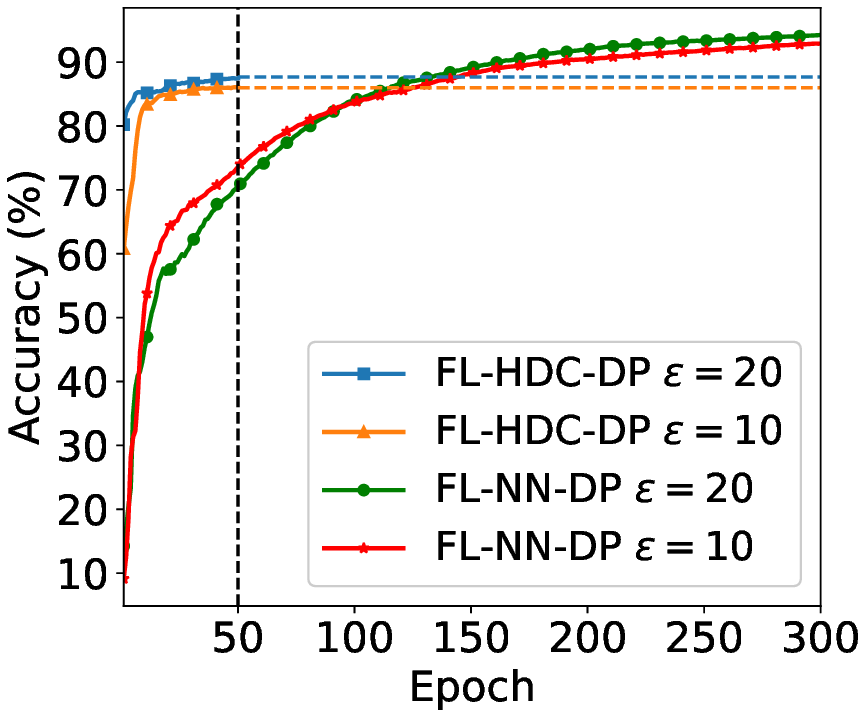}%
    \label{fig:acc2}
  }
  \vspace{-0.48em}
  \caption{Accuracy vs. epochs of FL-HDC-DP and FL-NN-DP under different data distributions and privacy budgets: (a) IID and non-IID data with $\epsilon = 20$ and (b) non-IID data with $\epsilon = 10, 20$.}
  \vspace{-1.9em}
  \label{fig:acc22}
\end{figure}

Fig.~\ref{fig:acc22} compares the accuracy versus epochs of FL-HDC-DP and FL-NN-DP under privacy constraints across different data distributions and privacy budgets. 
Fig.~\ref{fig:acc22}\subref{fig:acc1} shows that, under the same privacy budget ($\epsilon=20$), FL-HDC-DP converges faster than FL-NN-DP for both IID and non-IID partitions: under IID, it exceeds 80\% accuracy in the first round and reaches about 90\% within 40 rounds, whereas FL-NN-DP typically needs around 140 rounds to reach 90\% accuracy, i.e., about $3.5\times$ more rounds. 
Under non-IID data, both methods converge more slowly and exhibit lower steady-state accuracy, yet FL-HDC-DP remains consistently higher throughout training. 
Fig.~\ref{fig:acc22}\subref{fig:acc2} further evaluates non-IID performance under two privacy budgets and confirms the privacy--accuracy trade-off: at round 50, FL-HDC-DP achieves about 88\% for $\epsilon=20$ and 86\% for $\epsilon=10$, while FL-NN-DP attains about 72\% and 69\%, respectively. 
Moreover, FL-NN-DP typically requires 120--150 rounds to catch up and reaches about 92\%--94\% by 300 rounds. 
Note that the dimension of FL-HDC-DP in Fig.~\ref{fig:acc22}\subref{fig:acc1} and Fig.~\ref{fig:acc22}\subref{fig:acc2} is $d=10000$.

\subsection{Convergence Curve Fitting}
Fig.~\ref{fig:fitting} illustrates the relationship between the dimension $d$ and the number of rounds to converge $J_d$ under different target accuracies, and validates the fitting performance of the proposed sigmoid-based model. Fig.~\eqref{fig:conv_no_dp} corresponds to the setting without DP noise, while Fig. \eqref{fig:conv_dp} shows the case with DP noise ($\epsilon = 20$). We observe that, for the same dimension and accuracy threshold, introducing DP noise increases the number of rounds required to reach the target accuracy, i.e., convergence is noticeably delayed. However, in both the DP and non-DP settings, $J_d$ exhibits a similar trend with respect to $d$: it first decreases rapidly and then gradually saturates as the dimension increases. By fitting all curves using Eq. \eqref{J_d}, the proposed sigmoid-shaped function has good generality: for any given privacy budget and target accuracy, simply adjusting the model parameters allows us to accurately characterize the relationship between the $d$ and $J_d$.

\subsection{Total Energy Minimization}

For the optimization study, clients use an IID partition. The privacy budget is fixed to $(\varepsilon,\delta)=(25,10^{-5})$. We minimize the end-to-end training energy by jointly optimizing the resource and model variables, hypervector dimension $d$, bandwidth allocation, transmit power, and CPU frequency, subject to (i) a target test accuracy of $88\%$ and (ii) the fixed DP requirement $(\varepsilon=25,\delta=10^{-5})$. The DP noise variance is calibrated in advance to satisfy the budget and is not treated as a decision variable. 
Based on the convergence results under this setting, we fit the dimension–convergence relationship using Eq. \eqref{J_d} obtaining the parameters
$\mu = 15.25$, $\nu = 99.99$, $\alpha = 7.80$, and $\beta = 5.69$, which are used throughout the subsequent optimization.



\begin{figure}[!t]
\vspace{-1.68em}
  \centering
  \subfloat[Without DP noise]{%
    \includegraphics[width=0.485\linewidth]{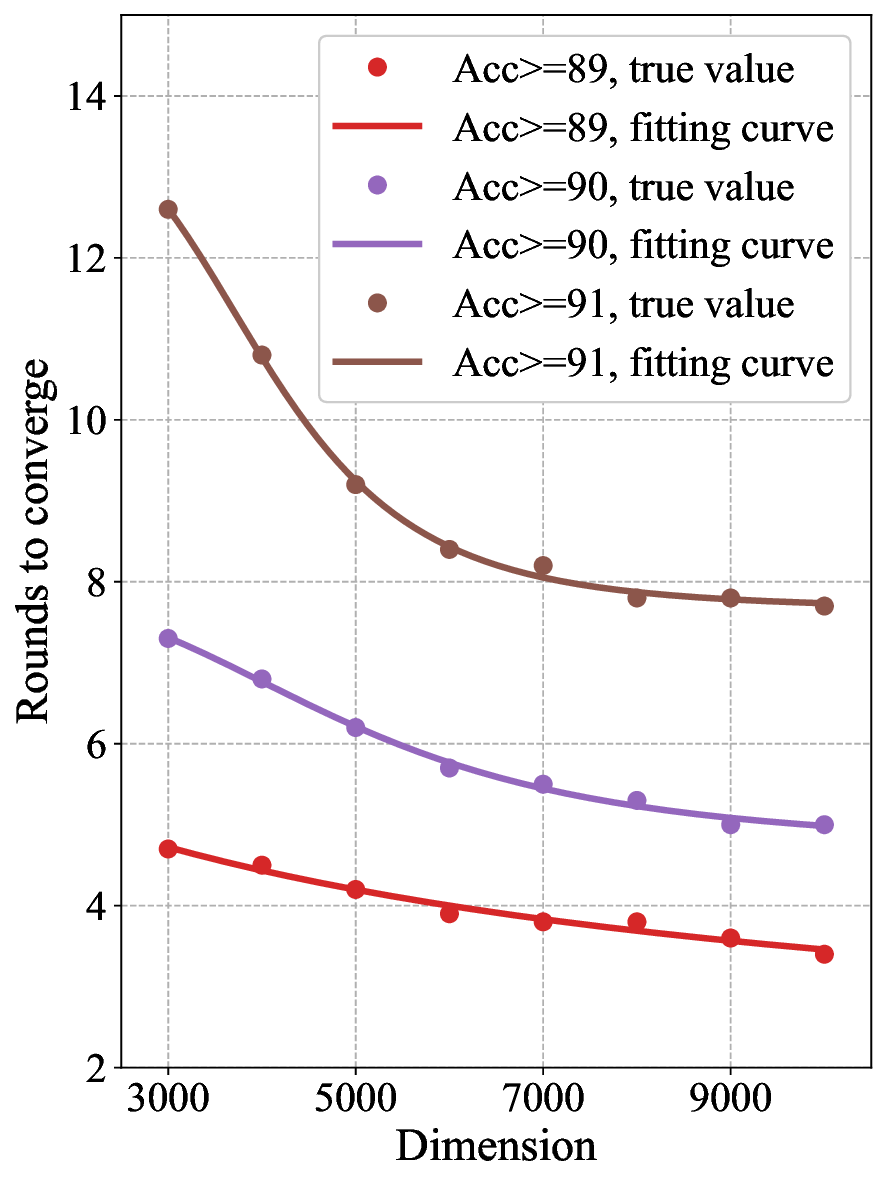}%
    \label{fig:conv_no_dp}
  }
  \hfill
  \subfloat[With DP noise ($\epsilon = 20$)]{%
    \includegraphics[width=0.485\linewidth]{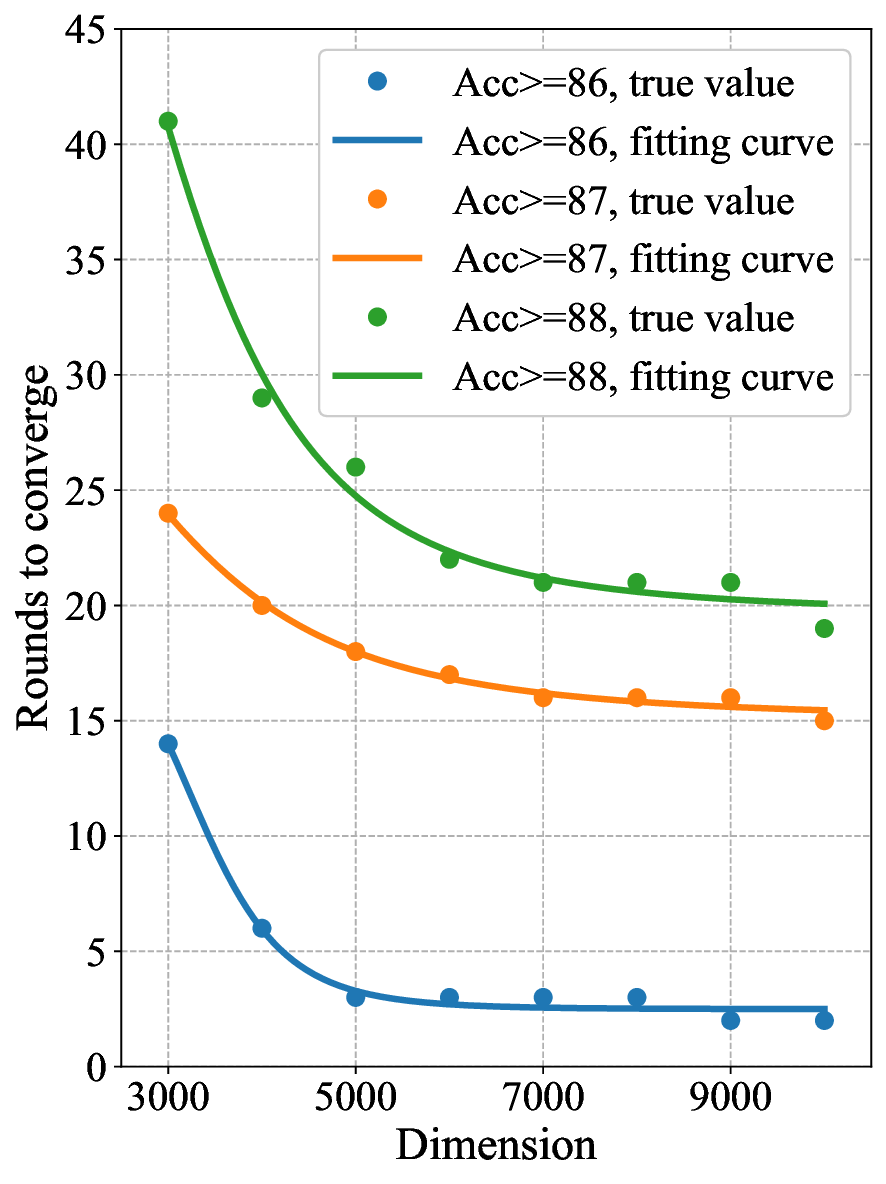}%
    \label{fig:conv_dp}
  }
  \vspace{-0.46em}
  \caption{Fitting performance of the proposed sigmoid-variant model for the relationship between the hypervector dimension $d$ and the number of rounds to converge $J_d$ under different target accuracies: (a) without DP noise and (b) with DP noise ($\epsilon = 20$).}
  \label{fig:fitting}
\end{figure}

Fig.~\ref{fig:dimension} demonstrates the total energy versus the dimension $d$ for several total bandwidths, where bandwidth, transmit power, and CPU frequency are optimized to meet a fixed target accuracy. It is shown that total energy is non-monotonic in $d$ and reaches a minimum at $d=4000$. According to Eq. \eqref{J_d}, increasing $d$ from 3000 to 4000 reduces the required global rounds by about 20; this reduction outweighs the higher per-round computation and transmission cost, leading to lower total energy. Increasing $d$ further to 5000 reduces rounds by only about 5 more, which is insufficient to offset the increased per-round energy, so the total energy rises. At low bandwidths (0.5-1~MHz), transmission remains the bottleneck, yielding higher curves and a more pronounced turning point, whereas at higher bandwidths (5/10/20~MHz) communication constraints ease and the curves shift downward and flatten, with diminishing gains from further bandwidth increases.

\begin{figure}[!t]
    \vspace{-1.68em}
    \centering
    \includegraphics[width=0.4\textwidth]{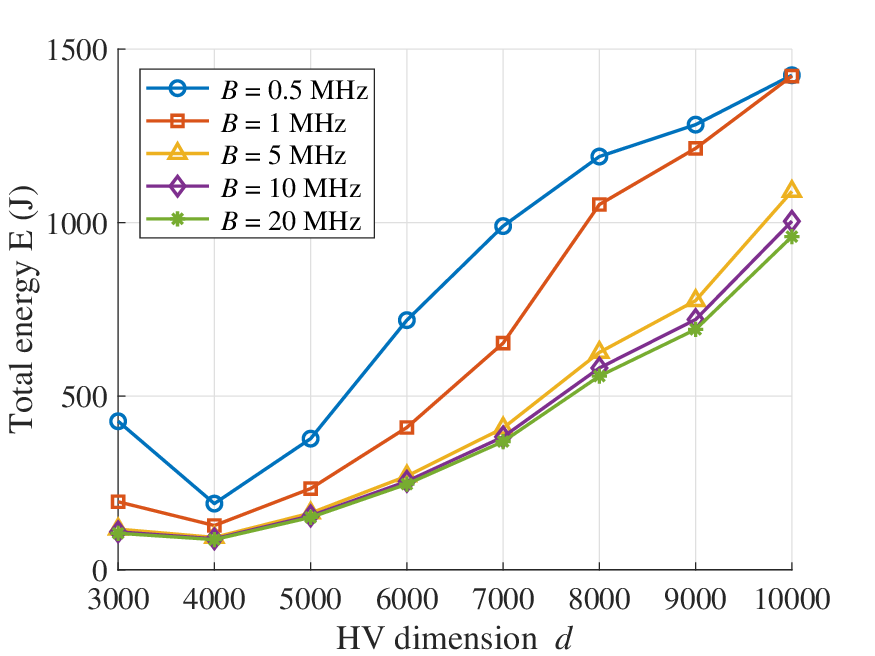}
    \caption{Total energy vs. hypervector dimension under different total bandwidth values.}
    \label{fig:dimension}
\end{figure}

\begin{figure}[!t]
    \centering
    \includegraphics[width=0.4\textwidth]{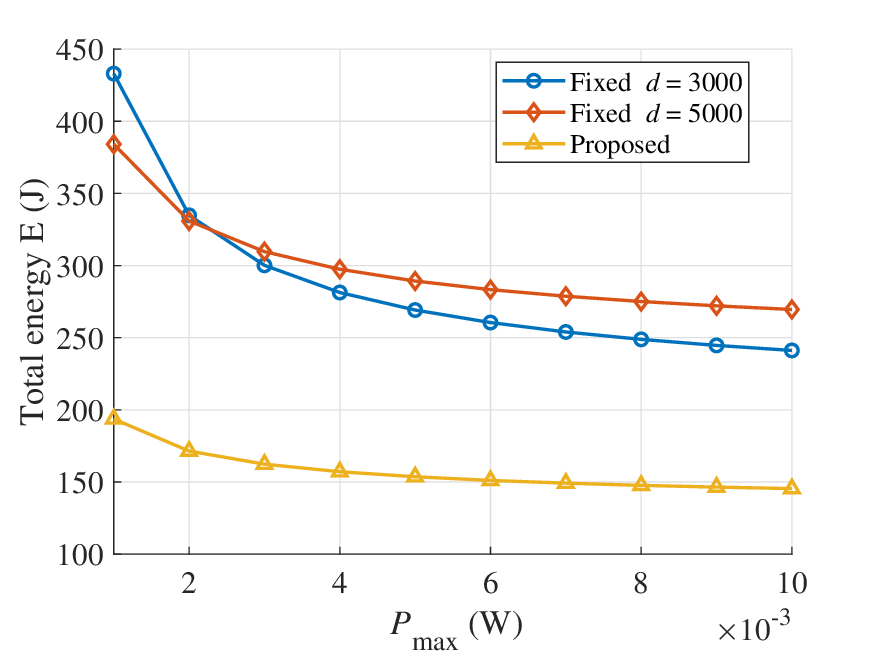}
    \vspace{-0.88em}
    \caption{Total energy vs. maximum transmit power.}
    \label{fig:power}
    \vspace{-1.68em}
\end{figure}

Fig.~\ref{fig:power} compares the total energy versus the maximum transmit power $P_{\max}$ for the proposed scheme and two fixed-dimension baselines ($d=3000$ and $d=5000$). The total energy decreases with $P_{\max}$, dropping sharply at small $P_{\max}$ and then saturating. This is because higher power increases the uplink rate and shortens transmission, leaving more time for local computation; once the rate constraint is sufficiently relaxed, computation dominates and further power yields diminishing returns. The proposed scheme achieves the lowest energy, about 40\% lower than the fixed $d=3000$ baseline and 46\% lower than the fixed $d=5000$ baseline, by selecting optimal $d=4000$ in this setting. The results also indicate that total energy is not monotonic in $d$, consistent with Fig.~\ref{fig:dimension}.

\begin{figure}[!t]
    \vspace{-1.68em}
    \centering    \includegraphics[width=0.4\textwidth]{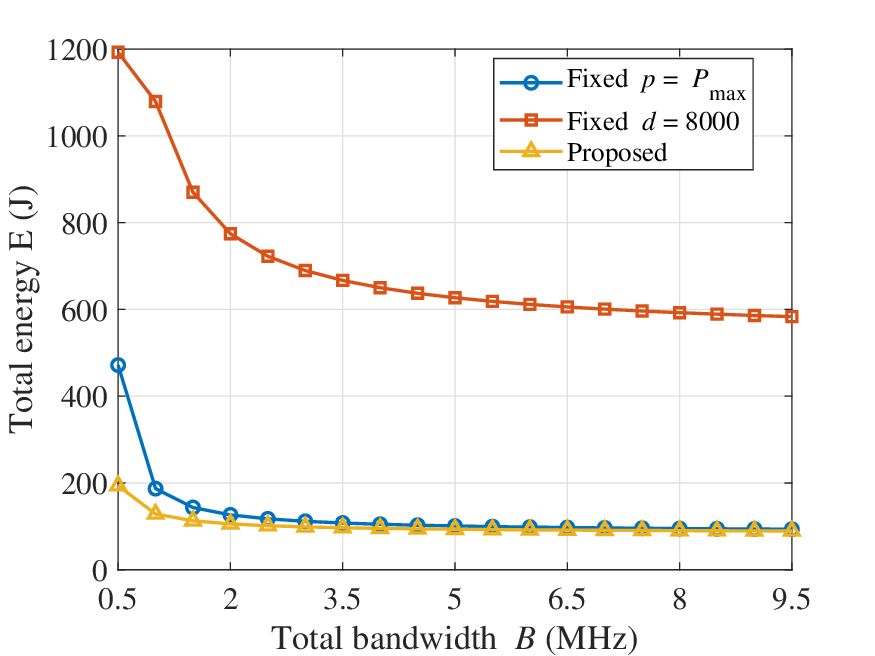}
    \vspace{-0.88em}
    \caption{Total energy vs. total bandwidth.}
    \label{fig:Bandwidth}
    \vspace{-1.28em}
\end{figure}

\begin{figure}[!t]
    \centering
\includegraphics[width=0.45\textwidth]{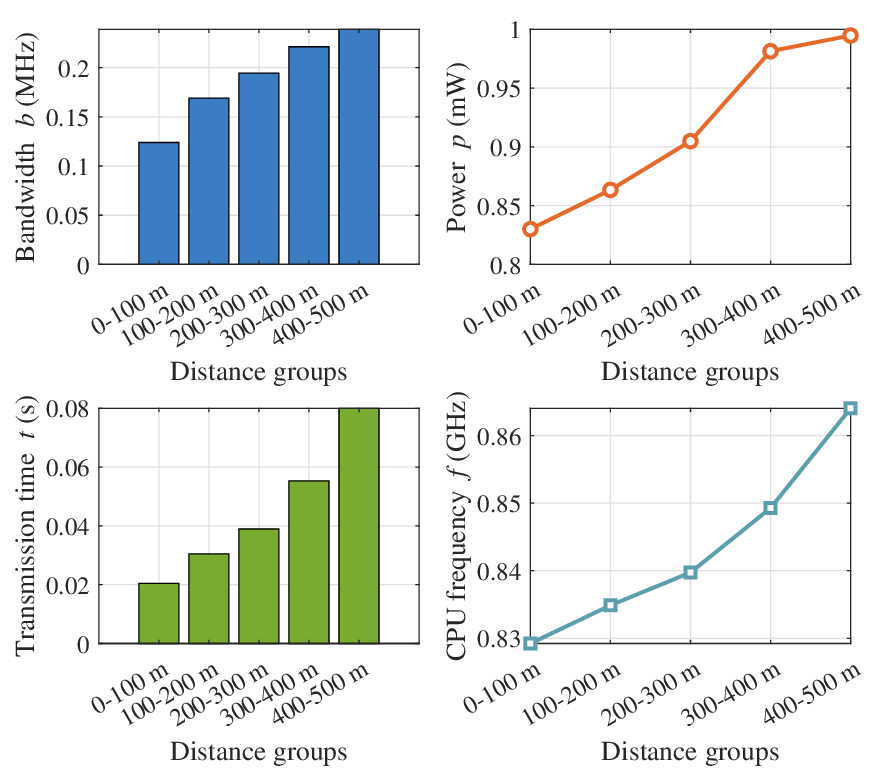}
    \vspace{-1.1em}
    \caption{Optimized average values of bandwidth, transmit power, transmission time, and CPU frequency across different distance groups.}
    \label{fig:allparameter}
    \vspace{-1.5em}
\end{figure}

Fig.~\ref{fig:Bandwidth} shows the total energy versus the total bandwidth $B$. For all schemes, energy decreases with $B$, dropping sharply over 0.5-2~MHz and then leveling off, since larger bandwidths improve the uplink rate and shorten the transmission time; once the rate constraint becomes non-binding, local computation dominates and further bandwidth offers limited benefit. The proposed scheme consistently achieves the lowest energy, saving up to 83.3\% compared with the fixed-$d$ baseline and about 57\% compared with the fixed-$p$ baseline. This gain comes from jointly optimizing $d$ and $p$, whereas the blue baseline fixes $p=P_{\max}$ and the red baseline fixes $d=8000$. The comparison also reveals that optimizing $d$ affects both computation and transmission energy, while optimizing bandwidth or power mainly reduces the transmission component, leading to larger overall savings from dimension optimization.

\subsection{Optimization Strategy of Proposed Method}
Fig. \ref{fig:allparameter} summarizes how the optimizer allocates bandwidth, transmit power, transmission time, and CPU frequency to uniformly distributed users as the link quality degrades with distance. A consistent pattern emerges. Bandwidth increases with distance, far users receive more spectrum to compensate for lower SNR and reduced spectral efficiency, which is an energy-efficient adjustment strategy compared with pushing power alone. Transmit power also increases with distance, with edge users approaching the device limits, indicating that when rate and latency targets become challenging, the system simultaneously employs power amplification and additional bandwidth strategies. Even so, transmission time for distant users still extends because transmitting the same payload over weak channels requires more time. 
This extended transmission time compresses computation time, prompting the optimizer to increase CPU frequency to ensure local processing completes within time constraints. Overall, distant users are more resource-hungry, consuming higher energy to meet the performance standards maintained by near users.

\section{Conclusion}\label{con}
In this paper, we have proposed a secure FL framework based on HDC and DP for resource-constrained edge devices in wireless networks. We have formulated a joint optimization of the HDC dimension, transmission time, transmit power, bandwidth, and CPU frequency to minimize total energy consumption, which is the first study to consider the impact of HDC dimension on energy optimization. To address this issue, we have proposed a two-level alternating optimization scheme. First, we have established the relationship between HDC dimensions and the number of iterations required to achieve target accuracy. Subsequently, an outer alternating optimization is applied to obtain the dimension, and an inner alternating optimization is applied to obtain resource allocation at a fixed dimension, where closed-form update formulas for computational and communication resources are derived for each iteration. We have further proposed a method for finding initial solutions by minimizing transmission time. Numerical results have demonstrated that FL-HDC-DP converges faster than traditional NN-based approaches while maintaining high accuracy. Moreover, the proposed sigmoid-variant model fits well the empirical relationship between dimension and convergence rounds. Furthermore, the proposed optimization scheme has achieved lower total energy consumption than the baseline method, with dimension selection having the most significant impact among the optimization variables. 


%


\ifCLASSOPTIONcaptionsoff
  \newpage
\fi



%

%

\bibliographystyle{IEEEtran}
\bibliography{ref}
\end{document}